\newtheorem{definition}{\textbf{Definition}}
\newtheorem{theorem}{\textbf{Theorem}}
\newtheorem{example}{\textbf{Example}}
\newcommand{\kc}{$k$-core\xspace}
\newcommand{\kcs}{$k$-cores\xspace}
\newcommand{\kmc}{($k$-1)-core\xspace}
\newcommand{\ks}{$k$-shell\xspace}
\newcommand{\kms}{($k$-1)-shell\xspace}
\newcommand{\akc}{anchored $k$-core\xspace}
\newcommand{\ekc}{edge $k$-core\xspace}
\newcommand{\ekcp}{edge $k$-core problem\xspace}
\newcommand{\clprs}{collapsers\xspace}
\newcommand{\fol}{follower\xspace}
\newcommand{\fols}{followers\xspace}
\newcommand{\cae}{candidate anchor edge\xspace}
\newcommand{\ce}{candidate edge\xspace}
\newcommand{\ces}{candidate edges\xspace}
\newcommand{\ancs}{anchors\xspace}
\newcommand{\ol}{{onion layer}\xspace}
\newcommand{\ols}{{onion layers}\xspace}
\newcommand\blfootnote[1]{%
  \begingroup
  \renewcommand\thefootnote{}\footnote{#1}%
  \addtocounter{footnote}{-1}%
  \endgroup
}
\title{K-Core Maximization through Edge Additions}
\author{{Zhongxin Zhou$^{1,2,4}$\and Fan Zhang$^{1}$\and Xuemin Lin$^{2,3,4}$\and Wenjie Zhang$^{3}$\And Chen Chen$^{2}$}
%\vspace{3.6mm}
\\
%\fontsize{10}{10}
%\selectfont\itshape
\affiliations
$^1$Guangzhou University, Guangzhou, China\\
$^2$East China Normal University, Shanghai, China\\
$^3$University of New South Wales, Sydney, Australia\\
$^4$Zhejiang Lab, Hangzhou, China\\
\emails
%\fontsize{9}{9} \selectfont\ttfamily\upshape
\{zzxecnu, fanzhang.cs\}@gmail.com, \{lxue, zhangw\}@cse.unsw.edu.au, chenc@zjgsu.edu.cn\\
}
\begin{document}

\maketitle

\begin{abstract}
A popular model to measure the stability of a network is \kc~- the maximal induced subgraph in which every vertex has at least $k$ neighbors.
Many studies maximize the number of vertices in \kc to improve the stability of a network.
In this paper, we study the \ekc problem: Given a graph $G$, an integer $k$ and a budget $b$, add $b$ edges to non-adjacent vertex pairs in $G$ such that the \kc is maximized.
%The existing solution is proposed for graphs with bounded tree-width, while the techniques cannot be extended to handle general graphs.
We prove the problem is NP-hard and APX-hard.
A heuristic algorithm is proposed on general graphs with effective optimization techniques.
%which significantly reduces candidate edges, speeds up the \kc update and reuses intermediate results, based on several theoretical findings.
Comprehensive experiments on 9 real-life datasets demonstrate the effectiveness and the efficiency of our proposed methods.
%The size of \kc in a network naturally reflects the engagement level of the network.
%Recently, the $k$-core model has been widely used in analyzing network structures. Finding a $k$-core helps to recognize a stable community or cluster network where each entity is related to at least $k$ other entities. Unfortunately, breakdown of some network connections may lead to unraveling of the entire network. In this paper, we study the problem of the \ekc. Given a graph $G$, a budget $b$ and an integer $k$, we aim to add $b$ edges to $G$ so that we can maximize the size of $k$-core. We first formally define the problem and prove its NP-hardness and inapporoximate. Then a baseline greedy algorithm is proposed. However, this technique is cost-prohibition to handle large graphs. So we develop an optimized algorithm, named \texttt{EKC}, by adopting novel optimizing techniques. These techniques can be categorized in two types, reducing the \ces or speeding up the \fol computation. Finally, comprehensive experiments on serveral real-life graphs demonstrate the effectiveness and efficiency of our proposed methods.
\end{abstract}

\blfootnote{Zhongxin Zhou and Fan Zhang are the joint first authors. Fan Zhang is the corresponding author.}

\section{Introduction}
\label{sec:intr}

%{\color{red} test}
Graphs are widely used to model networks, where each vertex represents a user and each edge represents a connection between two users.
The cohesive subgraph model of \kc, introduced by~\cite{seidman1983network}, is defined as the maximal induced subgraph in which every vertex has at least $k$ neighbors (adjacent vertices) in the subgraph.
The \kc of a network corresponds to the natural equilibrium of a user engagement model: each user incurs a cost (e.g., $k$) to remain engaged but receives a benefit proportional to (e.g., equal to) the number of engaged neighbors.
Since the number of vertices (the size) of \kc reflects the stability of a network,
it is widely adopted in the study of network engagement (stability), e.g.,~\cite{DBLP:journals/siamdm/BhawalkarKLRS15,DBLP:conf/cikm/MalliarosV13,DBLP:conf/wsdm/WuSFLT13}.
%In network, a user (entity) remains engaged if she is connected to at least $k$ users in the network; otherwise she will leave the network.
%Her departure may be contagious and leads to a cascade of user departures.
%The natural equilibrium of this model corresponds to the \kc of the network.

%The \kc is widely adopted in the study of network engagement (stability), e.g.,~\cite{morone2019k,DBLP:journals/siamdm/BhawalkarKLRS15,DBLP:conf/cikm/MalliarosV13,DBLP:conf/wsdm/WuSFLT13}. In a basic model, a user (entity) remains engaged if she is connected to at least $k$ users in the network; otherwise she will leave the network.
%Her departure may be contagious and leads to a cascade of user departures.
%The natural equilibrium of this model corresponds to the \kc of the network.

%Since the number of vertices (the size) of \kc reflects the stability of a network,
To prevent network unraveling, Bhawalkar and Kleinberg et al. propose the \akc problem which maximizes the \kc by anchoring $b$ vertices~\cite{DBLP:journals/siamdm/BhawalkarKLRS15}, where the degree of an anchor is considered as infinitely large.
There are a series of following work to maximize the \kc, e.g.,~\cite{DBLP:conf/icde/ZhangZQZL18,DBLP:journals/pvldb/ZhangZZQL17,DBLP:conf/aaai/ChitnisFG13}.
In order to improve network stability, another basic graph operation is edge addition, which can also be applied to \kc maximization.
%Besides the vertex anchoring, edge addition, as a basic graph operation, is also feasible on the study of \kc maximization.
%It is a common practice to improve network stability through edge additions, e.g.,
%friends recommendation system in social networks, connection construction in telecom networks, etc.
Thus, the \ekc problem is proposed~\cite{DBLP:conf/csr/ChitnisT18}: Given a graph $G$,  an integer $k$ and a budget $b$, add $b$ edges to non-adjacent vertex pairs in $G$ such that the \kc is the largest. %The following is an example.

\begin{figure}[t]
	%\vspace{2mm}
	\centering
	\includegraphics[width=0.6\columnwidth, height=24mm]{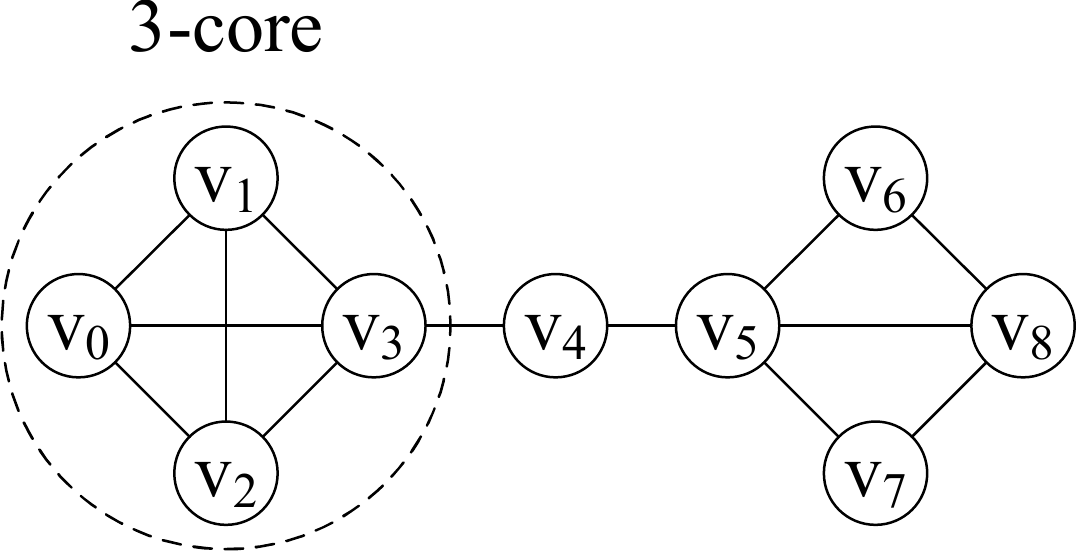}
	%\vspace{-2mm}
	\caption{An Example of $k$-Core and Anchoring, $k=3$}
	\label{fig:motivation}
	%\vspace{-4mm}
\end{figure}

\begin{example}
Figure~\ref{fig:motivation} depicts a social group $G$ with 9 users and their connections. %Each node represents a person and edges represent the corresponding relationships among them.
The willingness of a user to keep engaged is influenced by the number of her friends (neighbors) in this group.
%The willingness of a member to remain relys on the number of his friends in the group. If a member has less than k friends, he will leave the group, otherwise he will stay.
According to the \kc model, suppose $k=3$, $v_{4}, v_{6}$, and $v_{7}$ firstly drop out.
Their departure leads to the leave of $v_{5}$ and $v_{8}$, as their degrees decrease to 1 which is less than $k$.
To improve network stability, we can anchor $v_{6}$ and $v_{7}$ based on \akc model, or add an edge between $v_{6}$ and $v_{7}$ based on \ekc model.
Both solutions lead to a larger \kc induced by the vertices in $G$ except $v_4$. %$G-\{v_4, (v_3,v_4), (v_4, v_5)\}$.
%In the view of graph manipulation, the \ekc is more cost-efficient in this example.
%are influenced by the , and it leads to the drop of $v_{5}, v_{8}$. The remaining engaged individuals form a 3-core.
%One posibility of preventing this is by "buying" the $v_{6}$, $v_{7}$ into being engaged. In this way we have to anchor two vertices. We can also add one edge between $v_{6}, v_{7}$ instead of anchor two vertices as an alternative. In fact, when you anchor an vertex $u$, that means all the edges with end-point $u$ are anchored, which means a mass of modifications on the graph. In this example, when we anchor $v_{6}$, $v_{7}$, 4 edges are anchored in equivalent. But if we consider adding edges, only one edge is added.
\end{example}

The \ekc problem can find many applications on real-life networks: friend recommendation in social networks, connection construction in telecom networks, etc.
For instance, in a P2P network, any user benefiting from the network should be connected to at least $k$ other users, to exchange resources. The holder of a P2P network can use the \ekc model to find which connections should be added between users so that a large number of users can successfully use the P2P network~\cite{DBLP:conf/csr/ChitnisT18}.
%Other
%The \ekc model may also be applied on social network, distributed network, biology and so on.

\paragraph{Challenges and Contributions.} In this paper, we propose a concise reduction to prove that the problem is NP-hard and APX-hard.
%\footnote{It is argued in~\cite{DBLP:conf/csr/ChitnisT18} that the \ekc problem is NP-hard for $k\geq 3$ but the proofs are missing.}.
%We also show the non-submodular property.
%To the best of our knowledge, there is no existing algorithm for the \ekc problem on general graphs.
The only existing solution is proposed for graphs with bounded tree-width~\cite{DBLP:conf/csr/ChitnisT18}.
However, this assumption usually does not hold in real-life graphs, and their techniques cannot be extended to handle general graphs.
%, while the techniques cannot be extended to handle general graphs.
%A polynomial time algorithm has been proposed for graphs with bounded tree-width,
%while the techniques cannot be extended to handle general graphs.
Due to the hardness of the problem, we propose a heuristic algorithm with effective pruning techniques. %, through theoretical analysis and the reuse of intermediate result.
%Inspired by the idea of utilizing vertex deletion order in \kc computation, we further eliminate unpromising candidate edges through theoretical analysis.
%We also propose a strategy to largely reuse the intermediate result.
%A baseline greedy algorithm is developed. In order to handle large graphs efficiently, an optimized algorithm \ekc is proposed by integrating novel pruning strategies and techniques based on an effective \ol structure.
The experiments are conducted on 9 real networks to demonstrate the effectiveness and the efficiency of the proposed methods.

\section{Related Work}
\label{sec:related}

Graph processing on large data may require higher computation efficiency than traditional queries~\cite{DBLP:conf/icde/LuoWL08,DBLP:conf/icde/CheemaBLZW10,DBLP:journals/tkde/LuoWLZWL11}.
Cohesive subgraph mining is a fundamental graph problem, with various models such as clique~\cite{luce1949method}, $k$-core~\cite{seidman1983network}, $k$-fami~\cite{DBLP:conf/dasfaa/ZhangYZQLZ18}, etc.
Among the models, \kc is the only one known to have a linear time algorithm~\cite{DBLP:journals/corr/cs-DS-0310049}.
The \kc has a wide range of applications such as social contagion~\cite{ugander2012structural},
%community detection~\cite{DBLP:journals/pvldb/ZhangZQZL17},
influential spreader identification~\cite{kitsak2010identification},
collapse prediction~\cite{morone2019k},
%anomalies detection~\cite{DBLP:conf/icdm/ShinEF16},
%network analysis~\cite{dorogovtsev2006k},
%protein function prediction~\cite{altaf2003prediction},
%social resilience~\cite{DBLP:conf/cosn/GarciaMS13},
%core resilience~\cite{DBLP:conf/www/LaishramSEPS18},
user engagement study~\cite{DBLP:conf/cikm/MalliarosV13}, etc.

There is an efficient heuristic algorithm for the \akc problem~\cite{{DBLP:journals/pvldb/ZhangZZQL17}}, while it cannot be simply applied to solve the \ekc problem.
One major reason is that the \akc model does not change the topology of the graph while the \ekc model needs to add new edges.
Besides the \kc maximization work introduced in Section~\ref{sec:intr}, there are some studies on \kc minimization under the view of against attack~\cite{DBLP:conf/aaai/ZhangZQZL17,DBLP:conf/cikm/Zhu0WL18,DBLP:journals/corr/abs-1901-02166}.
The edge addition has been studied in different topics.
\cite{DBLP:journals/dam/NatanzonSS01} studies the hardness of edge modification problems on some classes of graphs. \cite{suady2014edge} aims to reduce the diameter of a graph by adding edges. \cite{DBLP:journals/endm/LaiTK05} aims to add a small number of edges in a graph to enlarge the bandwidth.
%Social network anonymization is another popular research direction.
\cite{DBLP:conf/asunam/KapronSV11} aims to anonymize a given vertex set by adding fewest edges.

\section{Preliminaries}
\label{sec:preli}

%In this section, we first give some basic notations and introduce those necessary preconcepts such as \kc and \ks. Then, we will formally define \ekc problem and talk about its hardness. Table~\ref{tb:notation} summarizes the mathematical notations appearing in this paper.

%\subsection{Problem Definition}
We consider a simple, undirected and unweighted graph $G = (V, E)$, where $V$ is a set of vertices and and $E$ is a set of edges.
We denote $n = |V|$, $m = |E|$ and assume $m > n$.
Let $S = (V', E')$ be an induced subgraph of $G$, where $V'\subseteq V$ and $E'\subseteq E$. %Given a subgraph $S$, we denote the adjacent vertex set (i.e. neighbor set) of $u$ in $S$ by $N(u, S$.
%We use $deg(u, S)$, the degree of $u$ in $S$, to represent the number of adjacent vertices of $u$ in $S$, i.e., $deg(u, S) = |N(u, S)|$.
%We use $G\setminus J$ to represent the subgraph which removes $J$ from $G$.
%In this paper, the deletion of a vertex also removes the edges incident to the vertex.
%When context is clear, we may omit the second parameter in notations, e.g., using $deg(u)$ for $deg(u, G)$.
The notations are summarized in Table~\ref{tb:notation}.

%The \kc is a popular cohesive subgraph model which is widely used recently, and then we will give a formal version of its definition.

\begin{definition}
\label{def:kcore}
\textbf{\kc}.
Given a graph $G$, a subgraph $S$ is the \kc of $G$, denoted by $C_{k}(G)$, if
($i$) $deg(u, S)\geq k$ for every vertice $u \in S$;
($ii$) $S$ is \textit{maximal}, i.e., any subgraph $S'\supset S$ is not a \kc.
%\footnote{In real-life applications, the value of $k$ can be user-specified or determined by ground-truth communities.}
\end{definition}

The \kc of a graph $G$ can be obtained by recursively removing every vertex $u$ and its incident edges in $G$ if $deg(u, G) < k$, with a time complexity of $O(m)$.
The \kcs of $G$ with different inputs of $k$ constitute a hierarchical structure of $G$, i.e., $C_{k+1}(G)\subseteq C_{k}(G)$ for every value of $k$.
%We can also define the highest core where $u$ exists as \textit{core number}, denoted by $cn(u)$. In our algorithm, there's a concept named \ks which denotes the vertices with the core number $k$. The following is the definition of \ks.
The definition of \ks is then derived. % from the above hierarchical structure.

\begin{definition}
\label{def:kshell}
\textbf{\ks}.
Given a graph G, the \ks of G, denoted by $H_{k}(G)$, is the set of vertices in \kc but not in ($k$+1)-core, i.e., $H_{k}(G) = V(C_{k}(G) - C_{k+1}(G))$.
\end{definition}

%The hierarchy of \kc also implies that each vertex in a graph has a unique core number~\cite{DBLP:journals/corr/cs-DS-0310049}. The set of vertices in a \ks is exactly the set of vertices with core number $k$.
%
%\begin{definition}
%\label{def:core number}
%\textbf{core number}.
%Given a graph $G$, the core number of a vertex $u$, denoted by $cn(u, G)$, is the largest $k$ such the \kc contains $u$, i.e., $cn(u, G) = max\{k~|~u\in C_k(G)\}$.
%\end{definition}

%In order to enlarge the \kc of $G$, we can allow some anchors. Once a vertex u is \textbf{anchored}, it always remain engaged irrespective of their degree, i.e., $deg(u, G)=+\infty$ if $u\in A$.
%
%\begin{definition}
%\label{def:akc}
%\textbf{\akc}
%Given a graph G and a vertex set $A\subseteq G$, the \akc, denoted by $C_{k}(G_{A})$, is the corresponding \kc of $G$ with vertices in $A$ anchored.
%\end{definition}
%But in fact, there exists another way to enlarge the size of \kc. Rather than anchor vertice, there is always a choice to increase connections between vertices. In this paper, we aim to maximize the \kc of $G$ by adding edges.

If we add some new edges among the vertices which are not adjacent, the \kc of the graph may contain more vertices, which is named the \ekc. The added new edges are called \underline{anchors} or anchor edges. In this paper, we say anchor, add, or insert an edge interchangeably, e.g., an inserted edge is also called an anchored edge. The edges, which may be inserted to the graph, are called \underline{candidate anchors} or candidate edges.

\begin{definition}
\label{def:ekc}
\textbf{\ekc}.
Given a graph $G$ and a set of anchor edges $A\subseteq (\tbinom{V}{2}\setminus E)$, the \ekc, denoted by $C_{k}(G+A)$, is the \kc of the graph $G' = (V, E\cup A)$.
\end{definition}

Due to the addition of anchor edges ($A$), more vertices might be retained in $C_{k}(G+{A})$, in addition to the vertices in $C_k(G)$.
%The stay of these new vertices may cascade to the stay of other vertices
Note that the vertices not incident to the anchor edges may also be retained, according to the contagious nature of \kc computation.
The vertices following the anchor edges $A$ to engage in \kc are named the \underline{followers} of $A$, denoted by $\mathcal{F}(A,G)$.
Formally, $\mathcal{F}(A,G)$ is the set of vertices in $C_{k}(G+A) \setminus C_{k}(G)$. The number of the followers reflects the importance of the corresponding anchor edges. %, which can be used for effectiveness evaluation in experiment section.

%Before we move on to the problem statement part, here we present one theorem given in [11, Efficient core maintenance in large dynamic graphs], [13, Streaming algorithms for k-core decomposition] based on which the correctness of our baseline's pruning technique is proved.
%
%\begin{theorem}
%	\label{theorem:core number change about edge}
%	[11], [13] After inserting to (resp. removing from) $G=(V, E)$ an edge, the core number of a vertex $u\in V$ increases (resp. decreases) by at most 1.
%\end{theorem}

%We may use $\mathcal{F}$($A$) to denote $\mathcal{F}$($A, G$) if appropriate.

%\vspace{2mm}
%\noindent \textbf{Problem Statement.}
\paragraph{Problem Statement.} Given a graph $G$, a degree constraint $k$ and a budget $b$, the \ekc problem aims to find a set $A$ of $b$ edges in ${V\choose 2}\setminus E$ such that the number of followers of $A$ is maximized, i.e., $\mathcal{F}(A,G)$ is maximized.
\begin{table}[t]
	\label{sec:preli:notation}
%\scriptsize
\small
	\centering
	\begin{tabular}{|p{0.24\columnwidth}|p{0.66\columnwidth}|}
		\hline
		\textbf{Notation}	&	\textbf{Definition}	\\ \hline
		$G$	&	an unweighted and undirected graph	\\ \hline
		
        $u$, $v$; $e$, $(u,v)$	&	a vertex in $G$; an edge in $G$  \\ \hline
		
        $m$; $n$ 	&	the number of edges in $G$; the number of vertices in $G$  \\ \hline

		$N(u, G)$	&	the set of adjacent vertices (neighbors) of $u$ in $G$  \\ \hline
		
		$deg(u, G)$	&	the number of adjacent vertices of $u$ in $G$ \\ \hline
		
		$S$	&	a subgraph of $G$	\\ \hline
		
		$V(S)$; $E(S)$	& the vertex set of $S$; the edge set of $S$  \\ \hline
		
        $G[X]$ &  induced subgraph of the vertex set $X$ in $G$  \\ \hline

		%$n$; $m$	&	the number of vertices (resp. edges) in $G$ \\ \hline
        $C_{k}(G)$; $H_{k}(G)$	& the \kc of $G$; the \ks of $G$	\\ \hline

        $k$; $b$	&	the degree constraint; the anchor budget \\ \hline

		$A$	&     a set of anchor edges \\ \hline
		
		$G_{A}$; $G_{e}$	& the graph $G+A$; the graph $G+\{e\}$ \\	\hline

        $\mathcal{F}(A, G)$ & the followers of the anchor set $A$ in $G$ \\ \hline
		
		%$G_{B} (G_{e})$	&	graph G with the addtion of $B$ ($e$)	\\	\hline
		
		%$cn(u)$	&	core number of the vertex $u$	\\	\hline	
			
		%$b$ &	the budget for edge addition	\\	\hline	

		$\mathcal{L}$; $L_{i}$		&	the \ols of $G$; the $i$-th layer of $\mathcal{L}$	\\ \hline
		
		%&	the vertices on $i$-th layer of $\mathcal{L}$	\\ \hline

		$l(u)$	&	layer index of $u$ in $\mathcal{L}$	\\	\hline

		$d^{*}(u)$	& number of neighbors of $u$ in its higher layers\\ \hline
		
		%$L_{i}^{j}$	&	$\bigcup_{i\leq k\leq j}{L_{k}}$	\\	\hline
	\end{tabular}
\vspace{-2mm}
	\caption{Summary of Notations}
	\label{tb:notation}
	%\vspace{-1mm}
\end{table}

\section{Complexity}
\label{sec:comp}

%The \ekc problem is argued to be NP-hard in~\ref{DBLP:conf/csr/ChitnisT18}. Here, we present a simple proof
%In this section, we prove the \ekc is NP-hard and inapproximate using two similar reductions.

\begin{theorem}
\label{the:np-hard}
Edge \kc problem is NP-hard when $k\geq 3$.
\end{theorem}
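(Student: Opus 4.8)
The plan is to prove NP-hardness by a reduction from \textsc{Maximum Coverage} (equivalently \textsc{Set Cover} for the exact version): given a universe $U=\{x_1,\dots,x_p\}$, a family $T_1,\dots,T_q\subseteq U$, and an integer $b$, select $b$ sets maximizing the number of covered elements. Since \textsc{Maximum Coverage} is itself NP-hard (and APX-hard), a gap-preserving version of the same reduction is what I would also use to obtain the APX-hardness claimed in the abstract. The intended correspondence is: an added \anc $\leftrightarrow$ a chosen set, and a \fol $\leftrightarrow$ a covered element, up to an additive constant that is identical across all budget-$b$ solutions, so that maximizing \fols is the same optimization as maximizing coverage.

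The gadget I would build has three parts. A base block $B$ that is a clique on $k+1$ vertices: every vertex of $B$ has degree $k$, so $B$ is a \kc on its own and remains in $C_k(G+A)$ for every choice of $A$, serving as a permanent supply of in-core neighbors. For each set $T_j$ a \emph{set vertex} $a_j$ is attached to $B$ so that it sits just below the threshold, and a fixed small number of designated \caes from $B$ to $a_j$ raise $deg(a_j)$ to $k$ and pull $a_j$ into the core. For each element $x_i$ an \emph{element vertex} $e_i$ is attached to $B$ just below threshold and joined to every $a_j$ with $x_i\in T_j$. The engine of the reduction is the cascade: once $a_j$ enters $C_k$, it supplies the last required neighbor to exactly the $e_i$ with $x_i\in T_j$, so all elements of $T_j$ become \fols. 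With $k\ge 3$ the block $B$ and the degree budgets have enough room to realize all these attachments in a simple graph, which is precisely why the hypothesis $k\ge 3$ appears (the problem being tractable for smaller $k$).

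The forward direction is immediate: from a cover of size $b$, add the activation edges of the $b$ chosen set vertices; each $a_j$ enters $C_k$ and drags in its elements, so the number of \fols equals the number of covered elements plus a fixed contribution from the activated $a_j$. The reverse direction is where I expect the main obstacle. Given an arbitrary budget-$b$ set $A$ attaining a given number of \fols, I must extract a cover of comparable value, and for an exact correspondence this needs a \emph{canonicalization lemma}: the optimum may be assumed to consist solely of canonical activation edges, and no ``shortcut'' edge helps. The genuinely dangerous case is \emph{mutual support}, which is intrinsic to edge additions for \kc: two below-threshold vertices joined by one new edge can each reach degree $k$ and survive together, buying two \fols for a single \anc with no corresponding covered element (an edge directly between two element vertices, or worse, between two set vertices, the latter even igniting an unintended cascade).

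To defeat mutual support I would engineer the deficits so that every vertex that is mutually connectable to another savable vertex needs at least two new in-core neighbors; a single added edge then contributes only $+1$ to each endpoint and cannot by itself rescue either, so no pair of savable vertices is cheaply co-activated, while a legitimate activation routed through a core vertex of $B$ (which can serve arbitrarily many edges) still works, with the budget scaling by the constant activation cost. Combined with an exchange argument that reroutes any non-canonical productive edge to a canonical one without lowering the follower count, this forces the optimum into canonical form and makes the number of \fols an affine function of the coverage. Verifying this domination/exchange step, and checking that the extra incidences introduced to block shortcuts perturb neither $B$ nor the threshold structure (only in-core neighbors count, so dormant vertices contribute nothing until activated), is the crux of the argument and the step I would spend the most care on.
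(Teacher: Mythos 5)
Your high-level route is the paper's route: a reduction from Maximum Coverage with a permanent $(k{+}1)$-clique, per-set gadgets whose activation cascades into element vertices, and followers $=$ coverage plus a uniform additive term. You have also correctly located the one step that decides whether the reduction works: ruling out ``mutual support,'' where a single added edge between two deficit-one vertices rescues both. But your proposed repair does not close the gap, and is in fact internally inconsistent. If every pair of non-adjacent savable vertices is forced to deficit at least two, then in particular every element vertex needs two new in-core neighbors, so a single activated set vertex $a_j$ contributes only $+1$ and can no longer pull its elements into the core --- the coverage cascade, the engine of your forward direction, dies. If instead element and set vertices stay at deficit one, the shortcut is live: one edge between two set vertices activates two sets for the price of one (so a budget-$b$ anchor set simulates budget-$2b$ coverage and the correspondence to MC with budget $b$ breaks), and one edge between two element vertices buys two followers with no coverage semantics. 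The exchange argument you defer to is exactly the part that cannot be made to work on this gadget as designed.

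The paper escapes the dilemma with a different set gadget: each set $T_i$ becomes a cycle $M_i$ of $d+3$ vertices padded into the clique so that exactly \emph{two} of them ($u^i_{d+1}$ and $u^i_{d+3}$) have deficit one and all others have degree exactly $k$. Mutual support is then the \emph{intended} activation, not the enemy: the unique cheap move is the single internal edge $(u^i_{d+1},u^i_{d+3})$, which revives the whole cycle and hence all elements of $T_i$. Because each gadget has two deficit vertices, no single cross-gadget edge can activate even one gadget (the unrepaired deficit vertex still unravels the cycle), and any edge incident to element vertices yields at most two followers, which is dominated by the $d+3+|T_i|$ followers of a canonical activation. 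This makes the follower count $b(d+3)$ plus the coverage, an affine function of coverage with a slope and intercept independent of which sets are chosen --- precisely the canonical form you were trying to force by an exchange argument. If you want to salvage your single-vertex set gadget, you would need to import this idea, e.g.\ replace each $a_j$ by a structure requiring one specific internal edge and carrying a large uniform follower bonus.
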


\begin{proof}
We reduce the \ekc problem from the maximum coverage (MC) problem~\cite{karp1972reducibility} which is NP-hard.
The MC problem is to find at most $b$ sets to cover the largest number of elements, where $b$ is a given budget.
We consider an arbitrary instance of MC with $c$ sets $T_1,..,T_c$ and $d$ elements $\{e_1,..,e_d\} = \cup_{1\leq i\leq c} T_i$.
We suppose $c > k$, $c > b$, and each of the resulting $b$ sets contains at least 2 elements, without loss of generality.
Then we construct a corresponding instance of the \ekc problem on a graph $G$.
Figure~\ref{fig:complexity} shows a construction example from 3 sets and 4 elements when $k=3$.

The set of vertices in $G$ consists of three parts: $M$, $N$ and $Q$.
The part $M$ contains $c$ set of vertices where each set has $d+3$ vertices, i.e., $M = \cup_{1\leq i\leq c} M_i$ where $M_i = \cup_{1\leq j\leq d+3} u^i_j$.
The part $N$ contains $d$ vertices.
The part $Q$ is a ($k$+1)-clique where every two vertices of the $k+1$ vertices are adjacent.
For every $i$ and $j$, if $e_i\in T_j$ in the MC instance, we add an edge between $v_i$ and $u^j_i$. In Figure~\ref{fig:complexity}, these edges are marked in bold.
For every $M_i$ in $M$, we connect $u^i_j$ and $u^i_{j+1}$ by an edge for every $j\in [1,d+2]$, and we also connect $u^i_1$ and $u^i_{d+3}$.
For every $M_i$, we add edges between every vertex in $M_i$ and the vertices in $Q$ so that every vertex in $M_i\setminus\{u^i_{d+1}, u^i_{d+3}\}$ has a degree of $k$ and every vertex in $\{u^i_{d+1}, u^i_{d+3}\}$ has a degree of $k-1$.
We add $k-1$ edges between $v_i$ and the vertices in $Q$ for every $i\in [1,d]$.
The construction of $G$ is completed.

The \kc computation will delete all the vertices in $M$ and $N$. Thus, the \kc of $G$ is $Q$.
To enlarge the \kc, the solution (A) is to add edges between $u^i_{d+1}$ and $u^i_{d+3}$, which is most cost-effective.
Another solution is to add edges between $M_i$ and $M_j$ where $i\neq j$ may also enlarge the \kc, while this solution can always be replaced by solution (A) with at most the same number of \fols.
%Suppose there are $p$ of $M_i$ and the corresponding $v_j$ saved in \kc by solution (B), the at least $2p$ anchors
Adding an edge to a vertex in $N$ is not worthwhile, since we suppose each of the resulting $b$ sets contains at least 2 elements.
Thus, the \ekc problem always chooses $b$ of the $M_i$ in $G$ which corresponds to $b$ sets in the MC problem.
If there is a polynomial time solution for the \ekc problem, the MC problem will be solved in polynomial time.
\end{proof}

\begin{figure}[t]
\small
%\vspace{-2mm}
\begin{center}
\includegraphics[width=1\columnwidth,height=43mm]{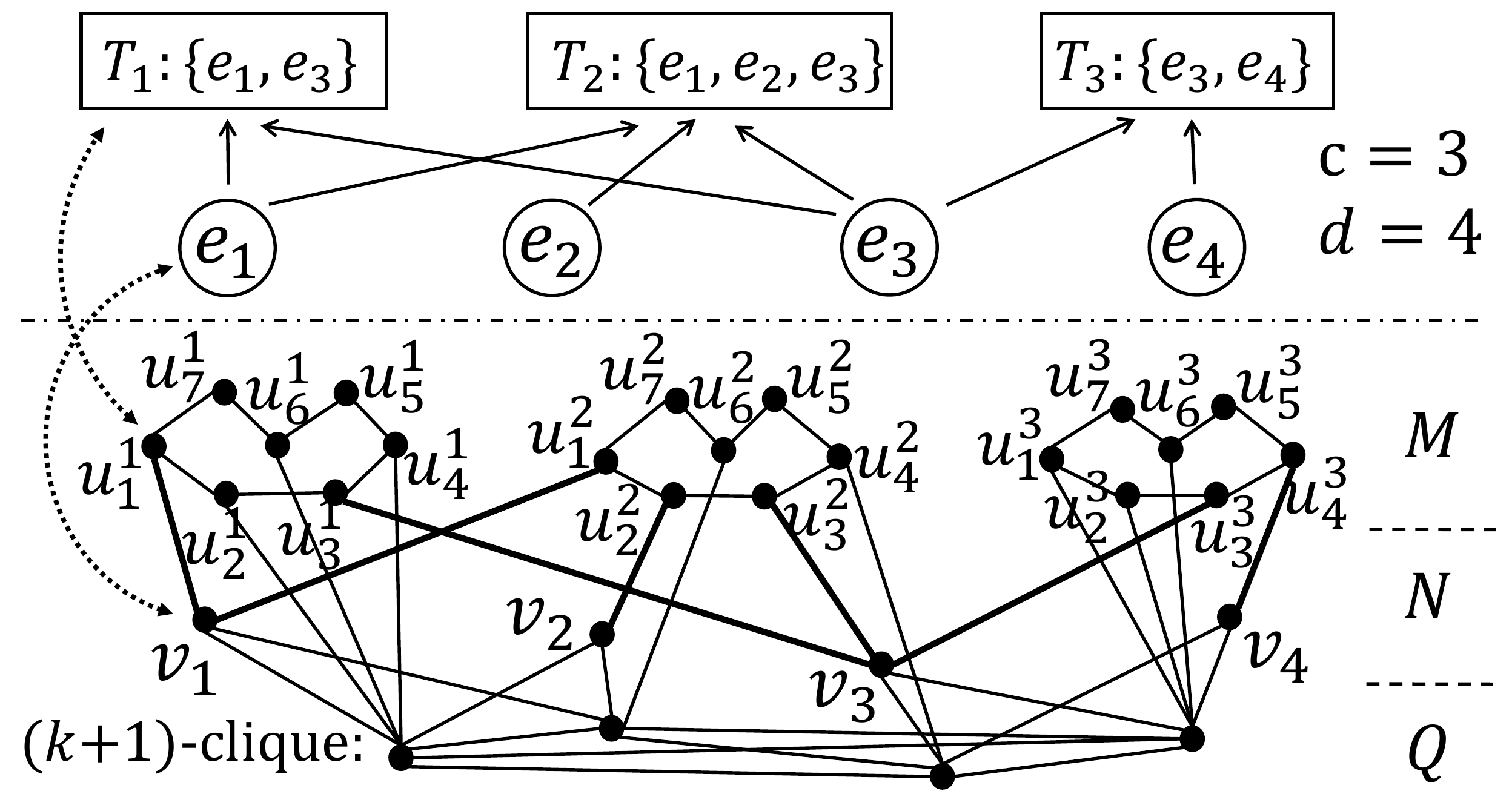}
\end{center}
\vspace{-3mm}
\caption{\small{Complexity Reduction, $k = 3$}}
\label{fig:complexity}
%\vspace{-2mm}
\end{figure}

\begin{theorem}
\label{the:inapprox}
For $k\geq 3$ and any $\epsilon > 0$, the \ekcp cannot be approximated in polynomial time within a ratio of $(1-1/e+\epsilon)$, unless $P=NP$.
\end{theorem}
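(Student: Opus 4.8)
The plan is to recycle the reduction from the proof of Theorem~\ref{the:np-hard} and upgrade it into a gap-preserving (approximation-preserving) reduction from Maximum Coverage, then invoke Feige's classical result that Maximum Coverage admits no polynomial-time $(1-1/e+\epsilon)$-approximation unless $P=NP$. The essential observation is that in the construction of Theorem~\ref{the:np-hard}, the only cost-effective way to enlarge the \kc is solution (A): each anchor edge $(u^i_{d+1}, u^i_{d+3})$ pulls the whole cycle $M_i$ into the core, which in turn rescues exactly those element vertices $v_i \in N$ adjacent to some now-engaged $u^j_i$. Hence a choice of $b$ anchors corresponds to a choice of $b$ sets $T_i$ in the Maximum Coverage instance, and the number of \fols equals $b(d+3)$ (the vertices of the chosen cycles) plus the number of covered elements. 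Selecting the $b$ anchors that maximize \fols is therefore equivalent to selecting the $b$ sets that cover the most elements.

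The difficulty is that the $b(d+3)$ term is a fixed additive contribution independent of which sets are chosen, and it is in fact larger than the coverage term (which is at most $d$); a naive transfer would dilute the multiplicative ratio. First I would remove this dilution by amplifying the coverage signal: replace each element vertex $v_i$ by a bundle of $t$ twin vertices (each attached to $Q$ with $k-1$ edges and to the same $u^j_i$'s), so that covering element $e_i$ now rescues $t$ \fols rather than one. With this modification the optimum follower count becomes $b(d+3) + t \cdot \mathrm{OPT}_{MC}$, where $\mathrm{OPT}_{MC}$ is the maximum coverage of the underlying instance.

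Next I would run the gap analysis. Given any $\alpha$-approximate solution to the \ekcp, it selects $b$ sets covering $C$ elements and yields $b(d+3)+tC \ge \alpha\,(b(d+3)+t\,\mathrm{OPT}_{MC})$ \fols, whence $C \ge \alpha\,\mathrm{OPT}_{MC} - (1-\alpha)\,b(d+3)/t$. Choosing $t$ as a sufficiently large polynomial in the instance size makes the slack term $(1-\alpha)b(d+3)/t$ vanish relative to $\mathrm{OPT}_{MC} \ge 1$, so that a $(1-1/e+\epsilon)$-approximation for the \ekcp would yield a $(1-1/e+\epsilon')$-approximation for Maximum Coverage with an arbitrarily small $\epsilon'$, contradicting Feige's theorem. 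Since the blow-up is polynomial, the whole reduction stays polynomial.

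The step I expect to be the main obstacle is making the gadget robust under amplification: I must verify that bundling the element vertices does not open any alternative, cheaper way to enlarge the core (e.g.\ anchoring inside $N$ or across different $M_i$), so that solution (A) remains optimal and the correspondence ``anchors $\leftrightarrow$ chosen sets'' is preserved exactly. This is precisely why the construction keeps every element vertex at degree $k-1$ toward $Q$ and rules out worthwhile anchors in $N$; I would re-check these degree invariants for the $t$-fold bundles and confirm that cross-$M_i$ anchors are still dominated by solution (A), exactly as argued in Theorem~\ref{the:np-hard}.
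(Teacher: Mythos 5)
Your proposal is correct and follows essentially the same route as the paper: a gap-preserving version of the Theorem~\ref{the:np-hard} reduction from Maximum Coverage, invoking Feige's $(1-1/e)$ inapproximability bound, with the per-element follower contribution amplified by a large polynomial factor so that the fixed $b(d+3)$ term from the cycle gadgets becomes negligible in the ratio. The only difference is cosmetic --- the paper amplifies by attaching a pendant loop of $p$ vertices to each $v_i$ (and enlarging $Q$ to a $p$-clique) whereas you use $t$ twin copies of each element vertex --- and your explicit gap inequality $C \ge \alpha\,\mathrm{OPT}_{MC} - (1-\alpha)\,b(d+3)/t$ is, if anything, a cleaner statement of the paper's $\lambda$ formula.
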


\begin{proof}
We reduce from the MC problem using a reduction similar to that in the proof of Theorem~\ref{the:np-hard}. For any $\epsilon > 0$, the MC problem cannot be approximated in polynomial time within a ratio of $(1-1/e+\epsilon)$, unless $P=NP$~\cite{DBLP:journals/jacm/Feige98}.
Let $p$ be an arbitrarily large constant.
There are two differences in the construction of $G$: (i) $Q$ is a $p$-clique; and (ii) every $v_i$ is attached by a loop of $p$ vertices where each vertex is connected to $Q$ by $k-2$ edges except $v_i$.
Let $\gamma > 1-1/e$, if there is a solution with $\gamma$-approximation on optimal \fol number for the \ekc problem, there will be a $\lambda$-approximate solution on optimal element number for MC, where $\lambda = \gamma + \frac{(\gamma-1)\times b(d+3)}{p\times f}$ and $f$ is the number of followers of \ekc problem. Thus, the theorem is proved. The \ekc problem is APX-hard.
\end{proof}

\begin{theorem}
\label{the:non-submodular}
Let $f(A) = |\mathcal{F}(A)|$. We have that $f$ is not submodular for $k\geq 2$.
\end{theorem}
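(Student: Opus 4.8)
The plan is to disprove submodularity by exhibiting an explicit counterexample. Recall that $f$ is submodular iff it satisfies diminishing returns: for all $A \subseteq B$ and every candidate edge $e \notin B$, we have $f(A \cup \{e\}) - f(A) \geq f(B \cup \{e\}) - f(B)$. Hence it suffices to build one graph $G$ together with nested anchor sets $A \subseteq B$ and an edge $e$ for which the marginal gain of $e$ is strictly \emph{larger} over the bigger set $B$ than over the smaller set $A$. Intuitively, anchor edges in the \ekcp can cooperate: a single edge may be too weak to raise any vertex to degree $k$, while two edges acting together trigger a cascade of \fols. I would exploit exactly this synergy.

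Concretely, I would attach a small gadget to a stable core. Let $Q$ be a $(k{+}1)$-clique, which is itself a \kc and will never be removed. Introduce a hub vertex $x$ joined to $k-2$ vertices of $Q$, so $deg(x,G) = k-2$, and two satellite vertices $y_1, y_2$ each joined to $k-1$ vertices of $Q$, so $deg(y_i,G)=k-1$; leave $x, y_1, y_2$ pairwise non-adjacent. Take the two candidate edges $e_1 = (x,y_1)$ and $e_2 = (x,y_2)$, and set $A = \emptyset$, $B = \{e_1\}$, $e = e_2$.

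The verification has two cases. With a single anchor, say $e_1$ alone, $x$ reaches only degree $k-1$, so \kc peeling deletes $x$; this drops $y_1$ back to degree $k-1$, so $y_1$ is deleted too, and $y_2$ was never above threshold. Hence $f(\{e_1\}) = f(\{e_2\}) = f(\emptyset) = 0$. With both anchors present, $x$ reaches degree $(k-2)+2 = k$ and each $y_i$ reaches degree $(k-1)+1 = k$, so $\{x,y_1,y_2\}$ all survive while remaining attached to $Q$, giving $f(\{e_1,e_2\}) = 3$. Therefore the marginal gain of $e_2$ is $f(\{e_2\})-f(\emptyset)=0$ over $A$ but $f(\{e_1,e_2\})-f(\{e_1\})=3$ over $B$, and $0 < 3$ violates the diminishing-returns inequality.

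The construction is elementary, so the only real care is in calibrating the degrees so that exactly two anchors are required: the hub must start at degree $k-2$ (one short after a single anchor, just sufficient after two) and the satellites at degree $k-1$ (each needing precisely one anchor, but useful only once the hub itself survives). The subtle step I expect to need the most attention is confirming that the single-anchor case genuinely yields zero \fols, i.e.\ that the peeling cascade removes the newly boosted satellite after the under-supported hub is deleted. I would also check the boundary $k=2$ separately, where $x$ starts isolated ($k-2=0$) and $Q$ is a triangle, and verify the cascade behaves as claimed there. Once this degree bookkeeping is in place, the counterexample is complete for every $k \geq 2$.
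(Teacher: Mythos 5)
Your proof is correct and takes essentially the same approach as the paper: both exhibit a counterexample built around a vertex of initial degree $k-2$ for which a single anchor edge yields no \fols but two cooperating anchor edges do, violating submodularity. The only cosmetic differences are that you phrase submodularity via diminishing returns rather than $f(A)+f(B)\geq f(A\cup B)+f(A\cap B)$, and your anchor edges join three sub-threshold vertices whereas the paper's run from two clique vertices to one degree-$(k-2)$ vertex.
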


\begin{proof}
For two arbitrary \clprs sets $A$ and $B$, if $f$ is submodular, it must hold that $f(A) + f(B) \geq f(A\cup B) + f(A\cap B)$.
Let $Q_1$ be a ($k$+1)-clique where vertices $u$ and $v$ are contained.
Let $Q_2$ be another ($k$+1)-clique.
We create a vertex $w$ and connect it to the vertices in $Q_2$ by $k-2$ edges.
If $A = (u,w)$ and $B = (v,w)$, $f(A) + f(B) = 0 < f(A\cup B) + f(A\cap B) = 1$.
\end{proof}
%\noindent \textbf{Problem Hardness.}
%The computation of \kc with \addes is exactly the same as the \kc computation except that we have to update $G$ before we start collapsing. Just as mentioned before, we can easily compute $C_{k}(G\cup A)$ in linear time with the given set A of \addes by recursively removing the vertices which break the \textit{degree} constraint. However, we still have to consider all the possible \ce sets and try to find the optimal one, which makes it harder to solve this problem. In fact, ~\cite{basic:hardness} has already proven that when $k\geq 3$, \ekc problem is NP-hard.
%And \ekc problem doesn't fit the properties of submodule either. We can easily prove it  by constructing a counterexample.

\section{Solution}
\label{sec:sol}

%\subsection{Baseline}
%\label{sec:sol:baseline}

%A straightforward exact solution is to exhaustively compute the \ekc for every possible combination of $b$ anchor edges. The time complexity of $O({n^2\choose b}\times m)$ is cost-prohibitive.
%The non-submodularity of the problem implies that it is unpromising to estimate the number of \fols on multiple edge additions.

Due to the NP-hardness and inapproximability of the problem, we resort to a greedy heuristic which iteratively finds the best anchor, i.e., the edge with the largest number of \fols.
The framework of the greedy algorithm is shown in Algorithm~\ref{alg:baseline}.
At Line~\ref{alg:baseline_2}, we compute the \fols for each \cae in the complement graph of $G$, except the anchored edges in $A$ and the edges between the \kc vertices.
Note that adding an edge between two (non-adjacent) \kc vertices cannot enlarge the \kc, because all the non-\kc vertices will still be deleted in \kc computation.
After the computation for every \cae, the best anchor is chosen at Line~\ref{alg:baseline_3}, and the \kc is updated by inserting the anchor edge. The time complexity of Algorithm~\ref{alg:baseline} is $O(b\times n^2\times m)$. As we do not explicitly record the candidate edges, the space complexity of Algorithm~\ref{alg:baseline} is $O(m)$.

\begin{algorithm}[tb]
%\small
	\caption{NaiveEKC}
	\label{alg:baseline}
	\textbf{Input}: $G$: a graph, $k$: degree constraint, $b$: budget\\
	\textbf{Output}: a set $A$ of anchor edges
	\begin{algorithmic}[1] %[1] enables line numbers
		\STATE{$A\gets\emptyset$; $C_k\gets C_k(G)$}
		\FOR {$i$ from $1$ to $b$}
		{
			%\STATE $N:=\{(u,v)|u\in S_{k-1}(G), v\in C_{k-1}(G)\}$;
			\FOR {each $e \in {V\choose 2}\setminus \{E\cup A\cup {V(C_k)\choose 2}\}$}
			\label{alg:baseline_1}
			{
				\STATE compute $\mathcal{F}(e, G+A)$;
				\label{alg:baseline_2}
			}
			\ENDFOR
			\STATE $e^{*} \gets$ the edge with most \fols; \label{alg:baseline_3}
%			\STATE delete $e^{*}$ from $N$ and update $N$,
			\STATE $A \gets A\cup e^{*}$; $C_k\gets C_k(G+A)$; \label{alg:baseline_4}
		}
		\ENDFOR
		\RETURN{$A$}
	\end{algorithmic}
\end{algorithm}

%In Algorithm~\ref{alg:baseline}, finding one best anchor may require to compute the \ekc for each of the candidate anchor edges, where the candidate number of almost $n(n-1)/2$ is still unaffordable.
%Thus, Algorithm~\ref{alg:baseline} can only handle small-scale graphs.

%In the following of this section, we aim to improve three aspects of the greedy algorithm: $(i)$ reducing the number of \ces in each iteration (Line~\ref{alg:baseline_1}); $(ii)$ reducing the computation cost of finding \fols (Line~\ref{alg:baseline_2}); and $(iii)$ the reuse of intermediate results.

%Note that we do not need to consider the vertices out of $C_{k-1}(G)$, since thoses vertices will not contribute to the \fols in one round due to theorem~\ref{theorem:core number change about edge}. %In this way, we reduce the time complexity from factorial to $O(b*n^{2})$ successfully, which shows a significant decrease.

% compare with exact

%However, this straightforward greedy algorithm cannot reach our demand when considering the efficiency. Not like the \akc problem, the \ekc problem has higher requirements for efficiency due to the almost quadratic candidates. In this paper, we aim to improve the two aspects of the greedy algotihm: $(i)$ the number of \ces in each iteration (Line~\ref{alg:baseline_1}); and $(ii)$ the computation cost of finding \fols (Line~\ref{alg:baseline_2}), which related to the number of vertices to explore in the process of searching.

\subsection{Optimizations on Each Iteration}
\label{sec:sol:reduceces}
%In the following of this section, we introduce effective theoretical results which significantly prune the unpromising candidate anchors.

%In this section, we introduce four techniques to reduce the number of \ces and speed up the computation of \fols.

In this section, we introduce optimization techniques for the first iteration of the greedy algorithm, i.e., the \ekc problem with $b = 1$.
%For a clear presentation, the techniques are introduced for the first iteration of the greedy algorithm, i.e., the \ekc problem with $b = 1$.
They can be immediately applied to other iterations by replace the \kc of $G$ by the \ekc of $G+A$.

%The techniques are presented
%One is based on \ol structure, and the other is based on the result of \fol computation from last iteration during the greedy heuristic algorithm execution.

\vspace{2mm}
\noindent
\textbf{Basic Candidate Pruning}
\vspace{1mm}

\noindent
The following theorem locates the scope of valid candidate anchors where each edge has at least one follower.

%which makes a baseline algorithm feasible.
%Thus in a basic exact solution for the \ekc problem, we have to exhaustively compute the \fols for every possible \ce set $A$ with size $b$, and find the optimal result. However, the number of possible solutions is $O(\tbinom{n^2-m}{b})$, which is enormous in the era of big data when the size of network is usually very large. Considering the hardness of the \ekc problem, we turn to a greedy heuristic, which intends to find the best \ce for addition iteratively, i.e., the edge with the largest number of \fols.

%What's more, in order to reduce the \ce set, we introduce the following theorem, which aims to tell us the property that an edge with \fols must have.

\begin{theorem}
\label{theorem:baseline_ce}
Given a graph $G$, if a \ce $e=(u, v)$ has at least one follower, we have that $u\in C_{k-1}(G)$ and $v\in C_{k-1}(G)$, where at least one of $u$ and $v$ is in $H_{k-1}(G)$.
\end{theorem}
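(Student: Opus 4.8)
The plan is to split the statement into its two assertions and prove each by reasoning about the vertex set of the \ekc $C_k(G+\{e\})$ after the insertion. Throughout I will lean on two standard facts: \emph{monotonicity}, that adding an edge can only grow a core, so $C_k(G)\subseteq C_k(G+\{e\})$ (since $C_k(G)$ is still a subgraph of $G+\{e\}$ with minimum degree at least $k$); and the \emph{maximality characterization}, that any subgraph whose minimum degree is at least $k-1$ must be contained in $C_{k-1}(G)$. The latter follows because the first vertex of such a subgraph to be peeled during the $(k-1)$-core computation would still retain degree at least $k-1$ at that moment, a contradiction.

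For the first assertion ($u,v\in C_{k-1}(G)$), I would first argue that the anchor edge $e$ must actually appear in $C_k(G+\{e\})$. If it did not, then $C_k(G+\{e\})$ would be a subgraph of $G$ with minimum degree at least $k$, forcing $C_k(G+\{e\})\subseteq C_k(G)$; together with monotonicity this gives equality and hence no \fol, contradicting the hypothesis. So both $u$ and $v$ lie in $V(C_k(G+\{e\}))$. Now view the subgraph induced on $V(C_k(G+\{e\}))$ using only the original edges of $G$: every vertex other than $u,v$ keeps degree at least $k$, while $u$ and $v$ each lose exactly the edge $e$ and so retain degree at least $k-1$. This subgraph therefore has minimum degree at least $k-1$, and by the maximality fact it sits inside $C_{k-1}(G)$; in particular $u,v\in C_{k-1}(G)$.

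For the second assertion (at least one endpoint lies in $H_{k-1}(G)=C_{k-1}(G)\setminus C_k(G)$), I would argue by contradiction: suppose both $u$ and $v$ already belong to $C_k(G)$. Let $R=V\setminus V(C_k(G))$ be the set of vertices peeled while computing $C_k(G)$, and replay that same peeling order on $G+\{e\}$. Since $u,v\notin R$, the inserted edge is never incident to any vertex being removed and is itself never deleted, so each vertex of $R$ has the same current degree as in the original peeling and is removed for the same reason. Hence all of $R$ is peeled from $G+\{e\}$ as well, leaving $C_k(G)+\{e\}$, which has minimum degree at least $k$ and is therefore exactly $C_k(G+\{e\})$. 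This produces no \fol, a contradiction. Thus at least one endpoint avoids $C_k(G)$, and since it lies in $C_{k-1}(G)$ by the first part, it belongs to $H_{k-1}(G)$.

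The degree bookkeeping in both parts is routine; the step that needs care — and where I expect the main subtlety — is the order-independence argument in the second assertion. I must justify that replaying the $G$-peeling order is legitimate on $G+\{e\}$, i.e.\ that the extra edge truly never changes the degree of any peeled vertex (which holds precisely because $u,v\notin R$) and that the process cannot halt early. The clean justification is that the \kc is independent of removal order, so exhibiting a single valid order that removes all of $R$ suffices to conclude $V(C_k(G+\{e\}))=V(C_k(G))$.
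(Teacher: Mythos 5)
Your proof is correct. The first assertion is handled essentially as in the paper: you show the anchor edge must survive into $C_k(G+\{e\})$ (else the core is unchanged and there is no \fol), then strip $e$ and observe the remaining subgraph has minimum degree at least $k-1$, hence lies in $C_{k-1}(G)$. For the second assertion the paper takes a shorter static route: assuming $u,v\in C_k(G)$, it notes that $C_k(G)\subseteq C_k(G_e)\setminus\{e\}$, so $u$ and $v$ already have $k$ neighbours inside $C_k(G)$ and the whole of $C_k(G_e)\setminus\{e\}$ has minimum degree at least $k$ in $G$, forcing it into $C_k(G)$ by maximality and contradicting the existence of a \fol. You instead replay the peeling order of $G$ on $G+\{e\}$ and argue that, since neither endpoint is ever deleted, every peeled vertex sees exactly the same degrees and is removed for the same reason, so the core's vertex set cannot grow. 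Both arguments are sound; the paper's is more compact, while yours is more explicit about why the deletion process is unaffected and has the side benefit of matching the order-replay technique the paper itself uses later for the onion-layer pruning (Theorem~\ref{theorem:reduce_olprun}), so nothing is lost by your choice.
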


\begin{proof}
Suppose $u\notin C_{k-1}(G)$. We have $e\in C_{k}(G_e)$; otherwise, we have $C_{k}(G_e) = C_k(G)$, i.e., there is no follower of $e$.
If we remove $e$ from $C_{k}(G_e)$, then $deg(u, C_{k}(G_e)\setminus\{e\})$ and $deg(v, C_{k}(G_e)\setminus\{e\})$ are at least $k-1$, because $(u,v)\in C_k(G_e)$ and their degrees decrease by 1.
Thus, $C_{k}(G_e)\setminus\{e\}\subseteq C_{k-1}(G)$ which contradicts with $u\notin C_{k-1}(G)$.
Now we have proved that $u\in C_{k-1}(G)$ and $v\in C_{k-1}(G)$.
Suppose that $\{u, v\}\in C_k(G)$ and $C_k(G_e)\setminus C_k(G) = F \neq\emptyset$, we have that $C_k(G_e)\setminus\{e\}$ belongs to $C_k(G)$, which contradicts with $F\notin C_k(G)$.
%Let $F(e)$ be the follower set of $e=(u, v)$ and suppose $F(e)\neq \varnothing$. Note that $(N(u)\cap N(v))\cup F(e)\neq\emptyset$, otherwise, for any neighbor $x$ of $u$ and $v$ when both $u, v \notin C_{k}(G)$, we have $x\notin C_{k}(G_{e})$, which leads to $C_{k}(G)=C_{k}(G_{e})$ and thus $F_(e)=\emptyset$. Due to theorem~\ref{theorem:core number change about edge}, we can get $F(e)\subseteq S_{k-1}(G)$. So both two endpoints of $e$ must resist in $S_{k-1}(G)$ or at least one of the its neighbors in $S_{k-1}(G)$. Since $cn(u)$ and $cn(v)$ cannot less than $k-1$ by theorem~\ref{theorem:core number change about edge}, so $u$ and $v$ can only in $C_{k-1}(G)$. In consideration of presupposition that $u\notin$ \kc, then $u\in S_{k-1}(G)$, $v\in C_{k-1}(G)$.
\end{proof}

\begin{example}
In Figure~\ref{fig:example}, when $k=3$, the 3-core $C_{3}(G)$ is induced by $\{v_{0}, v_{1}, v_{5}, v_{6}\}$, and the 2-shell $H_{2}(G)=\{v_{2}, v_{3}, v_{7}, v_{8}\}$.
According to Theorem~\ref{theorem:baseline_ce}, we only need to consider every new edge $(u,v)$ or $(v,u)$ with $u\in H_2(G)$ and $v\in V(C_2(G)) = V(C_3(G))\cup H_2(G)$, as candidates.
%By theorem~\ref{theorem:baseline_ce}, the candidate edges' endpoints are inside $C_{2}(G)$. Consequently, to find the best \ce, we only consider ($v_{2}, v_{7}$), ($v_{2}, v_{8}$), ($v_{3}, v_{8}$) and those edges between $H_{2}(G)$ and $C_{3}(G)$ which do not appear in origin graph $G$. And such an edge like ($v_{4}, v_{8}$) cannot be a \ce because it don't have any \fols.
\end{example}

%According to Theorem~\ref{theorem:baseline_ce}, we can replace the scope of $e$ in Line~\ref{alg:baseline_1} of Algorithm~\ref{alg:baseline} by ``$\{(u,v)~|~u\in H_{k-1}(G+A), v\in C_{k-1}(G+A)\}\setminus E(C_{k-1}(G+A))$", which makes a feasible baseline algorithm.

%\subsubsection{Onion Layer based Candidate Pruning}
\vspace{2mm}
\noindent
\textbf{Onion Layer based Candidate Pruning}
\vspace{1mm}

\noindent
Given the \kmc of $G$, the computation of \kc on $G$ recursively deletes some vertices in the \kmc. The vertices are deleted in batch as their degrees are less than $k$ at a same time, like peeling an onion~\cite{DBLP:conf/icde/ZhangYZQ17,DBLP:journals/pvldb/ZhangZZQL17}. The first layer of the onion, denoted by $L_1$, consists of the vertices in \kmc with degree less than $k$, i.e., $L_1 = \{u~|~deg(u, C_{k-1}(G)) < k\}$. The deletion of the first layer vertices may decrease the degrees of other vertices, and produces the second layer.
Recursively, we have that $L_i = \{u~|~deg(u, G_i) < k\}$ where $G_i = C_{k-1}(G) - G[\cup_{1\leq j < i} L_j]$.

%Firstly, We need to introduce an auxiliary structure $\mathcal{L}$ named \ols. The \ols of $G$, denoted by $\mathcal{L}$, consists of the vertices in \textit{(k-1)}-shell and their neighbors that are not in \kc. That is to say, $\mathcal{L}:=S_{k-1}(G)\cup {\{N(S_{k-1}(G), G)\backslash C_{k}(G)\}}$.
%Fan et. al. [OLAK] studie the process of \kc computation, and they observe two phenomena. The \kc computation does not explicitly consider the deletion (i.e., leave) order of the non \kc vertices, and the process of \kc computation is like peeling an onion. These results motivate them to impose an \ol structure on $\mathcal{L}$ to facilitate computation of the \fols. $\mathcal{L}$ consists of s+1 layers, $\{L_{0}, L_{1}, ..., L_{s}]\}(L_{0}{s}=\mathbb{L})$.
%When we observe the cascade from $C_{k-1}(G)$ to $C_{k}(G)$, there's always vertice deletions in batch. We first peel the \kms by removing all vertices not satisfying the degree constraint at the same time, which are kept in the same layer $L_{1}$. Due to the vertice deletions, some vertices will break the degree constraint, which is known as a cascade. Repeat it until the peeling process terminates, and we have all the layers during cascade. In this paper, we use $l(u)$ to denote the layer index of a vertex $u$ in $\mathcal{L}$.

Let $l(u)$ denote the layer index of $u$, i.e., $u\in L_{l(u)}$. If $l(u) < l(v)$, we say $u$ is at a lower layer of $v$, or $v$ is at a higher layer of $u$.
We use $\mathcal{L}$ to denote the union of layers, i.e., $\mathcal{L} = \cup_{1\leq i\leq s} L_i$ where $s = max(\{l(u)~|~u\in H_{k-1}(G)\})$.
Let $d^*(u)$ denote the number of adjacent neighbors of $u$ in higher layers and \kc, i.e., $d^*(u) = deg(u, G')$ where $G' = C_{k-1}(G) - G[\{\cup_{1\leq i\leq l(u)} L_i\}\setminus u]$.

Benefit from $\mathcal{L}$, we propose an effective pruning technique. % which further eliminates unpromising candidate anchor edges.

%we propose an effective pruning technique. The following theorem indicates that only the edge which satisfies the degree constraint may have its own \fols.

\begin{theorem}
\label{theorem:reduce_olprun}
Given a graph $G$, if a \ce $e=(u, v)$ has at least one follower, we have that
($i$) $d^{*}(u)= k-1$ when $l(u)<l(v)$,
($ii$) $d^{*}(v)= k-1$ when $l(v)<l(u)$, and
($iii$) $d^{*}(u)=d^{*}(v)=k-1$ when $l(u)=l(v)$;
\end{theorem}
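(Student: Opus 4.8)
The plan is to combine two ingredients: a universal upper bound $d^*(w)\le k-1$ for every $w\in\mathcal{L}$, and the fact (from Theorem~\ref{theorem:baseline_ce}) that if $e=(u,v)$ has a follower then $e\in C_k(G_e)$, so \emph{both} endpoints survive the \kc peeling of $G_e$. The bound $d^*(w)\le k-1$ is immediate from the layer definition: for $w\in L_{l(w)}$ we have $\deg(w,G_{l(w)})<k$, and $d^*(w)$ is obtained from $\deg(w,G_{l(w)})$ by further deleting $w$'s same-layer neighbours, so $d^*(w)\le\deg(w,G_{l(w)})\le k-1$. Hence in all three cases it suffices to prove the matching lower bound $d^*\ge k-1$ for the relevant endpoint(s).

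The core of the argument is to track the peeling of $G_e$ and show that, at a controlled moment, the degree of a surviving endpoint collapses to exactly $d^*+1$. First I would argue that adding $e$ does not disturb the peeling of the layers strictly below $\ell^{*}:=\min(l(u),l(v))$: the degree changes of a vertex in $L_1,\dots,L_{\ell^{*}-1}$ are caused only by deletions among lower layers, and since $e$ is incident only to $u$ and $v$ (both in layers $\ge\ell^{*}$), those layers peel in $G_e$ exactly as in $G$. Likewise, every same-layer vertex other than the endpoints still has degree $<k$ after adding $e$ (its degree is unchanged), so it is still deleted in the same batch. Consequently, once layers $1,\dots,\ell^{*}$ have been peeled while retaining only the surviving endpoint(s), a retained endpoint $u$ has exactly $d^*(u)+1$ neighbours left: its neighbours in strictly higher layers and in $C_k(G)$ (which is $d^*(u)$ by definition) plus the anchor neighbour reached through $e$.

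For case (i) with $l(u)<l(v)$, I would apply this to $u$: after layers $1,\dots,l(u)$ are removed, $u$'s degree is $d^*(u)+1$, and since subsequent peeling can only remove further neighbours, $u$'s final degree is at most $d^*(u)+1$; as $u$ survives, its final degree is $\ge k$, forcing $d^*(u)+1\ge k$, i.e.\ $d^*(u)=k-1$. Case (ii) is identical by symmetry. For case (iii) with $l(u)=l(v)=\ell$, both endpoints are retained past layer $\ell$, so the same count gives $u$ degree $d^*(u)+1$ and $v$ degree $d^*(v)+1$ simultaneously; survival of both then forces $d^*(u)=d^*(v)=k-1$.

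The step I expect to be the main obstacle is making the ``peeling is undisturbed at and below $\ell^{*}$'' claim fully rigorous, in particular verifying that keeping a retained endpoint $u$ alive through the batch that deletes its same-layer neighbours leaves its degree at precisely $d^*(u)+1$ and not larger---that is, that no same-layer neighbour of $u$ is accidentally saved by $e$. This holds because $e$ touches only $u$ and $v$, so any other same-layer vertex keeps degree $<k$ and is deleted regardless; I would phrase this as a short invariant over the peeling rounds. A minor boundary case to address is when one endpoint lies in $C_k(G)$ rather than in $\mathcal{L}$, which can be folded in by treating the \kc as the highest layer so that the surviving-endpoint count still yields $d^*+1$.
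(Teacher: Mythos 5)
Your proof is correct and takes essentially the same approach as the paper's: both reuse the onion-layer deletion order on $G_e$, observe that the peeling below and at $\min(l(u),l(v))$ is undisturbed except for the retained endpoint(s), and conclude from the endpoint's residual degree $d^*+1$ that $d^*=k-1$. The only difference is presentational — the paper argues by contradiction ($d^*<k-1$ forces every vertex to be deleted at its old position, so no follower), while you argue directly via the degree upper bound on a surviving endpoint, and you spell out the ``peeling is undisturbed'' invariant that the paper leaves implicit.
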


\begin{proof}

Let $O$ be the vertex deletion order of computing the \kc on the \kmc.
We have $d^{*}(u) \leq k-1$ and $d^{*}(v)\leq k-1$, since $u$ and $v$ are deleted at their layers in $O$, respectively.
When $l(u)<l(v)$, we suppose $d^{*}(u) < k-1$.
The anchoring of $(u,v)$ increases the degrees of $u$ and $v$ by 1, if we use the same order $O$ to compute the \kc again, then $d^{*}(u)\leq k-1 < k$, i.e., every vertex will be deleted at the same position of the order $O$, including $u$ and $v$. It contradicts with $e = (u,v)$ has at least one follower.
Thus, case ($i$) and ($ii$) are proved.
When $l(u)=l(v)$, we suppose $d^{*}(u) < k-1$ without loss of generality. If we use the same order $O$ to compute the \kc again with the anchored edge $e$, $u$ will be deleted at the same position since $d^{*}(u) < k$. So $v$ can't survive after the next batch of deletions since $d^{*}(v) < k$. Finally, every vertex will be deleted.
%	As for case two, we prove it by contradiction. That means $d^{*}(u)\neq k-1$ of $d^{*}(v)\neq k-1$. And both $d^{*}(u)$ and $d^{*}(v)$ can not more than $k-1$ due to the propery of \ol structure. That means $d^{*}(u)<k-1$ or $d^{*}(v)<k-1$. Without loss of generality, we take the former in default.
%	In process of computing \ol structure, we remove all the vertices not satisfying the degree constraint at one round. When we add an edge between $u$ and $v$, both of their degrees increase one. At this time, it still holds that $d^{*}(u)+1\leq k-1$, which means $u$ can't survive in this round of onion-peeling like deletion. After the deletion of $u$, the \adde can't give support to $v$ any more, and $v$ cannot survive in the next round of deletion neither. So this kind of edge has no \fol.
%	The proof is the same for case one. Note that when $l(u)\leq l(v)$, the survive of $u$ may lead to some \fols, which can also give support to $v$. So there is no need for $d^{*}(v)$ to reach $k-1$ as well.
\end{proof}

%\begin{figure}[tb]
%	%\vspace{2mm}
%	\centering
%	\includegraphics[width=0.6\columnwidth]{figures//theorem34.pdf}
%	\vspace{-2mm}
%	\caption{Onion Layer Structure, $k=3$}
%	\label{fig:ol}
%	%\vspace{-4mm}
%\end{figure}

\begin{figure}[t]
	%\vspace{-3mm}
	\centering
	\includegraphics[width=0.57\columnwidth,height=25mm]{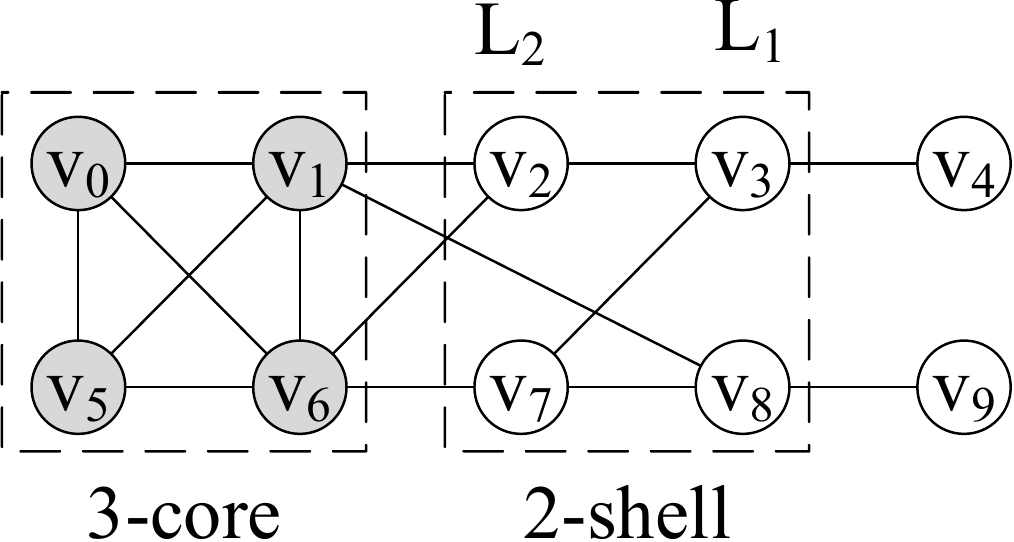}
	\vspace{-1mm}
	\caption{Candidate Seletion, $k=3$}
	\label{fig:example}
	%\vspace{-4mm}
\end{figure}

\begin{example}
In Figure~\ref{fig:example}, when $k=3$, layer 1 contains $v_{3}$ and $v_{8}$, and layer 2 contains $v_{2}$ and $v_{7}$.
By theorem~\ref{theorem:reduce_olprun}, ($v_{2}, v_{7}$) is not a promising \ce because $d^{*}(v_{7})=1$, but ($v_{3}, v_{8}$) is a proper \ce.
\end{example}

%In Theorem~\ref{theorem:reduce_olprun}, when $l(u)\leq l(v)$, there is no degree requirement for $v$ to ensure $e = (u,v)$ has at least one follower.
Note that if $l(u)\leq l(v)$, in the \kc computation with anchor $e$, the survive of $u$ may preserve some vertices before visiting $v$. Thus, in Theorem~\ref{theorem:reduce_olprun}, there is no degree requirement for $v$ to ensure that $e = (u,v)$ has at least one follower.

%\subsubsection{Onion Layer based Follower Computation}

\vspace{2mm}
\noindent
\textbf{Onion Layer based Follower Computation}
\vspace{1mm}

%\label{sec:sol:findfollower}
\noindent
A naive follower computation is to directly apply the \kc computation on the graph with the existence of an anchor.
An improved idea is to use the core maintenance algorithms, which update the \kc for \textit{every} $k$ with the addition of an edge~\cite{DBLP:conf/icde/ZhangYZQ17}.
For the \ekc algorithm, we only need to update the \kc of a given $k$ with an anchor edge.
Thus, we adopt and refine the follower computation in the vertex-anchored \kc algorithm (OLAK) which is shown to be more efficient than core maintenance for \kc update with a fixed $k$~\cite{DBLP:journals/pvldb/ZhangZZQL17}.

%Firstly, we present an algorithm to efficiently compute the \fols for a given anchor $x$ based on the \ol structure, which is proposed by [OLAK]. Then we simply extend it to fit our own problem.
%There's two key techniques in the OLAK algorithm. The former is about how to find candidate \fols efficiently. A naive implement is to directly apply the \kc computation algorithm with the existence of the anchor. To make it better, we can set the core number of the anchor vertex as infinite, and then use the continuous \kc maintenance algorithms[OLAK18,22,30] to update other vertices' core number. Those vertex whose core number increased to $k$ can be \fols. Although it brings a significant improvement, but Fan et. al. show the performance can be much better by using the well-organized structure of \ols.
In OLAK, anchoring a vertex means the degree of the vertex is infinitely large.
There is an observation that a vertex $u$ is the follower of the anchor vertex $x$ if there is a path $x\leadsto u$ based on neighboring relations, and $l(y)<l(z)$ for every two consecutive vertices $y$ and $z$ along the path. This indicates that we do not need to consider the vertices without such paths in the follower computation.
Different from \akc problem, \ekc problem requires us to add a new edge, where the vertex degree is increased by exactly 1, and we have to consider two vertices rather than one.

In our algorithm, given an anchor edge $(u,v)$ with $l(u) < l(v)$, we generate candidate \fols layer-by-layer through activating the neighbors at higher layers, starting from the layer of $u$. Here $u$ is the first activated vertex and only the activated vertices can activate their neighbors at higher layers.
In this activation procedure, each activated vertex is assigned an upper bound of its degree in the \ekc. The degree upper bound of a vertex $u$ is generated by counting its neighbors in higher layers and the activated neighbors of $u$ at other layers. Once the degree upper bound of a vertex is less than $k$, it will be deleted immediately and the upper bounds of its neighbours will be decreased by 1.
The follower computation is complete when the layer-by-layer activation is finished. Note that the two vertices incident to the anchor edge may also be deleted in this procedure. Once one of them is deleted, there is no follower and the computation is returned.

\vspace{2mm}
\noindent
\textbf{Follower based Candidate Pruning}
\vspace{1mm}

%When facing large network, the skyrocketing number of \ces is the first and most important problem that we have to solve. In this subsection, we introduce one more pruning techniques to further reduce the number of \ces. Even if we already have a great technique to filter \ces before we start the greedy heuristic, there exists a different way to achieve it. That is pruning during execution.
%The following theorem shows that we can safely skip some edges if they meet the restraint, i.e., thoes two endpoints of the edge is the \fol of other edges.%those edges is the follower of current accessed edge from $N$

\noindent
We can further prune some candidate edges by known results, when we retrieve the followers of some candidate edges.

\begin{theorem}
\label{theorem:reduce_folprun}
Given a \cae $e_{1} = (u_{1},v_{1})$ and its follower set $\mathcal{F}(e_{1})$, we have $\mathcal{F}(e)\subseteq \mathcal{F}(e_{1})$ for every candidate edge $e\in \{(u,v)~|~u\in \mathcal{F}(e_{1})~$and$~v\in \mathcal{F}(e_{1})\cup C_k(G)~$and$~u\neq v\}$.
\end{theorem}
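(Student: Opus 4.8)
The plan is to reformulate the desired follower inclusion $\mathcal{F}(e)\subseteq\mathcal{F}(e_1)$ as an inclusion of $k$-cores, and then to derive it from two elementary monotonicity facts about the $k$-core operation. First I would observe that, by the definition of followers, anchoring $e_1$ yields the $k$-core $C_k(G+e_1)=C_k(G)\cup\mathcal{F}(e_1)$. Consequently the hypothesis that $u\in\mathcal{F}(e_1)$ and $v\in\mathcal{F}(e_1)\cup C_k(G)$ says precisely that both endpoints of $e=(u,v)$ already lie in $C_k(G+e_1)$. This reformulation is the crux of the argument.

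Second, I would invoke the fact already established in the paper (and used in Algorithm~\ref{alg:baseline}) that inserting an edge between two vertices that both lie in the current $k$-core cannot enlarge it. Applying this to the graph $G+e_1$, whose $k$-core $C_k(G+e_1)$ contains both $u$ and $v$, gives $C_k(G+e_1+e)=C_k(G+e_1)$. Third, since $G+e$ is a subgraph of $G+e_1+e$ on the same vertex set, monotonicity of the $k$-core under edge addition yields $C_k(G+e)\subseteq C_k(G+e_1+e)$. Chaining these gives $C_k(G+e)\subseteq C_k(G+e_1)$, and subtracting $C_k(G)$ from both sides (each $k$-core contains $C_k(G)$ by monotonicity) converts this directly into $\mathcal{F}(e)\subseteq\mathcal{F}(e_1)$, the claim.

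The only auxiliary fact needing an explicit, easy justification is the monotonicity lemma: if $E(G_1)\subseteq E(G_2)$ on a common vertex set, then $C_k(G_1)\subseteq C_k(G_2)$. This holds because the vertex set $V(C_k(G_1))$ induces a subgraph of $G_2$ in which every degree is at least its value in $G_1$, hence at least $k$; maximality of $C_k(G_2)$ then forces $V(C_k(G_1))\subseteq V(C_k(G_2))$.

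I do not expect a genuine obstacle here: essentially all the content is in spotting that the stated membership conditions are equivalent to ``both endpoints of $e$ belong to $C_k(G+e_1)$'', after which the result follows from two applications of monotonicity. The only points to handle cleanly are the degenerate case $e=e_1$ (where the inclusion becomes an equality, and $G+e_1+e=G+e_1$ trivially), and the observation that when $e\neq e_1$ the pair $(u,v)$ remains non-adjacent in $G+e_1$, so that the ``edge inside the $k$-core'' step legitimately applies; both are immediate.
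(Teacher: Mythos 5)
Your proof is correct and takes essentially the same route as the paper's: both arguments hinge on the observation that the endpoints of $e$ already lie in $C_k(G+e_1)$ and then conclude $V(C_k(G+e))\subseteq V(C_k(G+e_1))$, from which $\mathcal{F}(e)\subseteq\mathcal{F}(e_1)$ follows by removing $C_k(G)$. Your version is actually more careful than the paper's, which asserts that inclusion with only an informal remark (``more vertices may be added\dots''), whereas you justify it cleanly through the intermediate graph $G+e_1+e$, the fact that an edge between two $k$-core vertices cannot enlarge the $k$-core, and monotonicity of the $k$-core under edge addition.
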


\begin{proof}
Let $e=(u,v)$. If $\mathcal{F}(e)$ is not empty, then $C_k(G_e)$ contains $C_k(G)$, $u$, $v$ and $\mathcal{F}(e)$. Because $C_k(G_{e_1})$ contains $u$ and $v$, i.e., more vertices may be added to $C_k(G_e)$ after anchoring $e_1$. So, $V(C_k(G_e))\subseteq V(C_k(G_{e_1}))$ and thus $\mathcal{F}(e)\subseteq \mathcal{F}(e_1)$.
%The followers of every $e\in \{(u,v)~|~u\in F~$and$~v\in F\cup C_k(G)~$and$~u\neq v\}$ are also the followers of $e_1$.
\end{proof}

\begin{example}
In Figure~\ref{fig:example}, when $k=3$,
%if we anchor (add) ($v_{3}, v_{8}$), all the four vertices in $S_{2}(G)$ can survive in \kc computation. So
if we get that $v_{2}$ and $v_{7}$ are the \fols of edge ($v_{3}, v_{8}$). According to Theorem~\ref{theorem:reduce_folprun}, the \fols of $(v_{2}, v_{7})$ are also the \fols of ($v_{3}, v_{8}$), thus $(v_{2}, v_{7})$ is not a promising candidate in current iteration.
\end{example}

%To fully utilize Theorem~\ref{theorem:reduce_folprun},
%We may compute the follower number for promising candidates first, to prune more candidate edges by Theorem~\ref{theorem:reduce_folprun}. However, our experiments show there is no specific visiting order to guarantee a large number of followers.
%Choosing vertices with large degrees or at high layers cannot certainly improve the pruning power of Theorem~\ref{theorem:reduce_folprun}.

\begin{algorithm}[t]
%\scriptsize
	\caption{EKC}
	\label{alg:ekc}
	\textbf{Input}: $G$: a graph, $k$: degree constraint, $b$: budget\\
	\textbf{Output}: A set $A$ of anchor edges
	\begin{algorithmic}[1] %[1] enables line numbers
		\STATE $A\gets\emptyset$;
		\FOR {$i$ from $1$ to $b$}
		{
            \STATE $N\gets \{(u,v)~|~u\in H_{k-1}(G+A), v\in C_{k-1}(G+A)\}\setminus E(C_{k-1}(G+A))$ (Theorem~\ref{theorem:baseline_ce});
            	\label{alg:ekc_1}
			\STATE compute $\mathcal{L}$ and filter $N$ based on Theorem~\ref{theorem:reduce_olprun};
			\label{alg:ekc_2}
			\FOR {each $e \in N$}
			\label{alg:ekc_x}
			{
				\STATE $\mathcal{F}(e)\gets$ FindFollower$(e, \mathcal{L})$ (Section~\ref{sec:sol:reduceces});
				\STATE update $N$ based on Theorem~\ref{theorem:reduce_folprun};
				\label{alg:ekc_3}
			}
		    \ENDFOR
			\STATE $e^{*} \gets$ the edge with most \fols; $A \gets A\cup e^{*}$;
			\STATE update $N$ (Section~\ref{sec:sol:reuse});	
		}
		\ENDFOR
		\RETURN{$A$}
	\end{algorithmic}
\end{algorithm}
\subsection{Reusing Intermediate Results across Iterations}
\label{sec:sol:reuse}
%\subsubsection{Connected Component based Optimization}

%Given a graph G and the degree constraint $k$ and the budget $b$, theorem~\ref{theorem:core number change about edge} shows us that only $k-b$
When one iteration of the greedy algorithm is completed, we get the best anchor edge $A$ and the number of followers for every candidate edge in this iteration.
These results can be reused since some connected components in the induced subgraph of \kms may keep the same topology after anchoring an edge.
%There are two cases for reusing the results:
%(i) if the anchor edge is added to one connected component $S$, the number of followers is same for every candidate edge in other connected components. In this case, the follower computation is only conducted on $S$ in the next iteration;
%(ii) if the anchor edge is added between two connected components, the follower computation is only conducted on the two components in the next iteration.
For each connected component $S$ where none of the vertices are incident to the anchor $A$, we can record the largest number of followers of one candidate anchor in $S$, for the later iterations.
%candidate anchor edge in every connected component, for the later iterations.
For the connected component(s) $S$ where there is a vertex incident to the anchor $A$, it is hard to reuse the results because the addition of edges may largely change the vertex deletion order and the layer structure in \kc computation.

\subsection{EKC Algorithm}
\label{sec:sol:ekcalgo}

We present the EKC algorithm in Algorithm~\ref{alg:ekc}, which optimizes the greedy algorithm by adopting all the proposed techniques.
At Line~\ref{alg:ekc_1}, the \ce set is restricted to $N$ according to Theorem~\ref{theorem:baseline_ce}.
In Line~\ref{alg:ekc_2}, we compute the \ols of $G$ and then exclude the candidates  %that cannot have any potential \fols
based on Theorem~\ref{theorem:reduce_olprun}.
%Then we start to find a best anchor by computing the follower number for every candidate edge.
The follower computation of a candidate edge is conducted by exploring $\mathcal{L}$ layer-by-layer, as introduced in Section~\ref{sec:sol:reduceces}.
%We record and update the edge with the largest number of followers.
The set $N$ is further filtered by Theorem~\ref{theorem:reduce_folprun} once a follower computation is completed.
After each iteration, we get the anchor edge with the most followers.
The algorithm terminates after $b$ iterations.

%And we calculate the upper bound of each \ce with $\mathcal{L}$ by the way.
%After filtering the \ces, we start to access them, and compute their \fols with the \ol structure. Note that we sort the \ce set $N$ based on their upper bounds of the number of \fols before, so that we can access them in decreaseing order.

%According to the follower-based pruning, Line~\ref{alg:ekc_3} excludes the \fols of current accessed edge from $N$ based on theorem~\ref{theorem:reduce_folprun}. In order to make our upper bound based pruning work, we continuously maintain the largest number of \fols seen so far. When the current accessed edge is non-promising by upper bound based pruning, they are eliminated. What's more, to avoid recomputing the candidate set and $\mathcal{L}$, we incrementally update these two sets at the end of each iteration. In each iteration, we will get the edge with the most followers and termintate after b edges are seleted.

%\subsection{Speeding Up Followers Search}
%\label{sec:sol:speedup}
%In this subsection, we stuied efficient algorithms to compute \fols for a given \adde. A naive solution is to recompute \kc on the updated graph, where exists plenty of room for improvement.% Here we give two techniques to speed up the process of searching \fols.

The resulting set of $b$ anchors is same to that in Algorithm~\ref{alg:ekc}. The correctness is guaranteed by the correctness of optimization techniques. The worst-case time complexity and space complexity are same to Algorithm~\ref{alg:baseline}. %, while Algorithm~\ref{alg:ekc} is much more efficient.

\section{Evaluation}
\label{sec:exp}
%This section evaluates the proposed techniques through comprehensive experiments.

%\subsection{Experimental Setting}
%\label{sec:exp:setting}

%\subsubsection{Algorithms}
%\vspace{1mm}
%\noindent \textbf{Algorithms.}
\paragraph{Algorithms.} As far as we know, there is no existing algorithm for the \ekc problem on general graphs. In the experiment, we implement and evaluate 11 algorithms as shown in Table~\ref{tb:algorithms}, where the bottom 6 algorithms produce the same result, because they use the same greedy heuristic and candidate visiting order.
%We also compare the \texttt{EKC} with the anchored \kc algorithm in~\cite{DBLP:journals/pvldb/ZhangZZQL17}.
%including the baseline algorithm and our proposed algorithm both in effictiveness and efficiency. Table~\ref{tb:algorithms} shows the summary for algorithms.

\begin{table}[t]
\small
	\centering
	\begin{tabular}{|p{0.16\columnwidth}|p{0.74\columnwidth}|}
		\hline
		% after  \\: \hline or \cline{col1-col2} \cline{col3-col4} ...
		\textbf{Algorithm}   & \textbf{Description}\\ \hline
		\texttt{Rand}  &  randomly chooses $b$ anchor edges (each with at least one follower) from $N_1 = \{(u,v)~|~u\in H_{k-1}(G), v\in C_{k-1}(G)\}\setminus E(C_{k-1}(G))$\\ \hline
		\texttt{Degree} & chooses $b$ \ancs from $N_{1}$ with largest degrees in $\mathcal{L}$\\ \hline
		\texttt{Layer} & chooses $b$ \ancs from $N_{1}$ at highest onion layers in $\mathcal{L}$\\ \hline
        \texttt{AKC}  &  the anchored \kc algorithm in~\cite{DBLP:journals/pvldb/ZhangZZQL17} \\ \hline
		\texttt{Exact}  &  identifies the optimal solution by exhaustively searching all possible combinations of $b$ anchors, with all the proposed techniques\\ \hline
		\texttt{Naive}  &  computes a \kc on $G$ for each candidate to find best anchor in each iteration (Algorithm~\ref{alg:baseline})\\ \hline
		\texttt{Baseline}  &  \texttt{Naive} + filtering candidates with Theorem~\ref{theorem:baseline_ce}\\ \hline
		\texttt{BL+O1}  &  \texttt{Baseline} + filtering candidates with Theorem~\ref{theorem:reduce_olprun} \\ \hline
		\texttt{BL+O2}  &  \texttt{BL+O1} + filtering candidates with Theorem~\ref{theorem:reduce_folprun}\\ \hline
		\texttt{BL+OF}  &  \texttt{BL+O2} + $\mathcal{L}$ based \fol computation (Section~\ref{sec:sol:reduceces})\\ \hline
		\texttt{EKC}  &  \texttt{BL+OF} + intermediate result reuse (Section~\ref{sec:sol:reuse})\\ \hline
	\end{tabular}
	\vspace{-2mm}
	\caption{Summary of Algorithms}
	\label{tb:algorithms}
	%\vspace{-3mm}
\end{table}

%\begin{itemize}
%\item \noindent \textbf{\texttt{Baseline.}}
%The baseline algorithm (Baseline).
%In each iteration, it finds a best candidate edge with maximal followers by computing the followers for every candidate edge.
%
%\item \noindent \textbf{\texttt{EKC.}}
%The advanced algorithm, a combination of all the technique we proposed.
%
%\end{itemize}

%The other algorithms will be introduced when appear in the experiments for the first time.

%\subsubsection{Datasets}

\begin{table}[t]
\small
  \centering
    \begin{tabular}{|l|l|l|l|l|}
      \hline
      % after  \\: \hline or \cline{col1-col2} \cline{col3-col4} ...
      \textbf{Dataset}   & \textbf{Vertices}  & \textbf{Edges}  & $d_{avg}$ & $k_{max}$\\ \hline
%      \texttt{P2p-Gnutella}  & 6301 & 20777 & todo & todo\\ \hline
      \texttt{Facebook}  &  4,039 & 88,234 & 43.69 & 115\\ \hline
      \texttt{Enron} & 36,692 & 183,831 & 10.02 & 43\\ \hline
	  \texttt{Brightkite}  &  58,228 & 214,078 & 7.35 & 52\\ \hline
      \texttt{Gowalla}  & 196,591 & 950,327 & 9.67 & 51\\ \hline
      \texttt{DBLP} &  317,080 & 1,049,866 & 6.62 & 113\\ \hline
	  \texttt{Twitter} & 81,306 & 1,768,149 & 33.02 & 96\\ \hline
      \texttt{Stanford}	& 281,903 & 2,312,497 & 14.14 & 71\\ \hline
      \texttt{YouTube}  & 1,134,890 & 2,987,624 & 5.27 & 51\\ \hline
      \texttt{Flickr}  & 513,969 & 3,190,452 & 12.41 & 309\\ \hline
%      \texttt{soc-dogster}  & 427k & 9m & 40 & 46.5k\\ \hline
%      \texttt{as-Skitter} &  1,696,415 & 11,095,298 & xxx & xxx\\ \hline
%      \texttt{Pokec}  & 1,632,803 & 8,320,605 & 10.2 & 20\\ \hline% & 10,0.3 \\ \hline
%      \texttt{LiveJournal} &  3,997,962 & 34,681,189 & 17.4 & 352\\ \hline% & 8,0.7\\ \hline
      %\texttt{Orkut}  & 3,072,441 & 117,185,083 & 76.3 & 2,242,775\\ \hline% & 5,50 \\ \hline 2,242,775
    \end{tabular}
\vspace{-2mm}
\caption{Statistics of Datasets}
\label{tb:datasets}
%\vspace{-3mm}
\end{table}

%\vspace{1mm}
%\noindent \textbf{Datasets.}
\paragraph{Datasets.}
\texttt{Flickr} is from \url{http://networkrepository.com/}, and the others are from \url{https://snap.stanford.edu/data/}.
%In \texttt{DBLP}, each vertex represents an author and each edge between two authors represents there is at least one co-authored paper of the two authors.
%There are existing vertices and edges.
Table~\ref{tb:datasets} shows the statistics of the datasets. % deployed in the experiments.
% We use the original vertex and edges information in other datasets.
 % In other datasets, there are existing edges between vertices.
%Table~\ref{tb:datasets} shows the statistics of the datasets.\\ %which are listed in ascending order of their edge numbers.
%
%\subsubsection{Parameters}
%%
%We conduct experiments under different settings by varying support constraint $k$ and the budget of \clprs $b$. The default value of $k$ is $15$ and $b$ is $10$.
%In the experiments, the range of $k$ varies from $5$ to $50$ and the range of $b$ varies from $1$ to $50$ with the step of $10$.\\

%\vspace{1mm}
\paragraph{Settings.} All programs are implemented in C++.
All experiments are performed on Intel Xeon 2.20GHz CPU and Linux System.
We vary the parameters $k$ and $b$.
%The effectiveness is evaluated by the number of the followers produced.
%The efficiency is measured by running time and the number of \ces.

%\begin{figure*}
%    \begin{minipage}[t]{0.35\linewidth}
%	\centering \includegraphics[width=0.8\columnwidth,height=20mm]{exp_figs//effectiveness//ekcvsrand_20_5_nonlog}
%	\vspace{-3mm}
%	\caption{9 Datasets, k=20, b=5}
%	\label{fig:effect_rand}
%    \end{minipage}%
%    \begin{minipage}[t]{0.32\linewidth}
%	\centering
%	\subfigure[Gowalla, b=3] { \includegraphics[width=0.45\columnwidth,height=20mm]{exp_figs//effectiveness//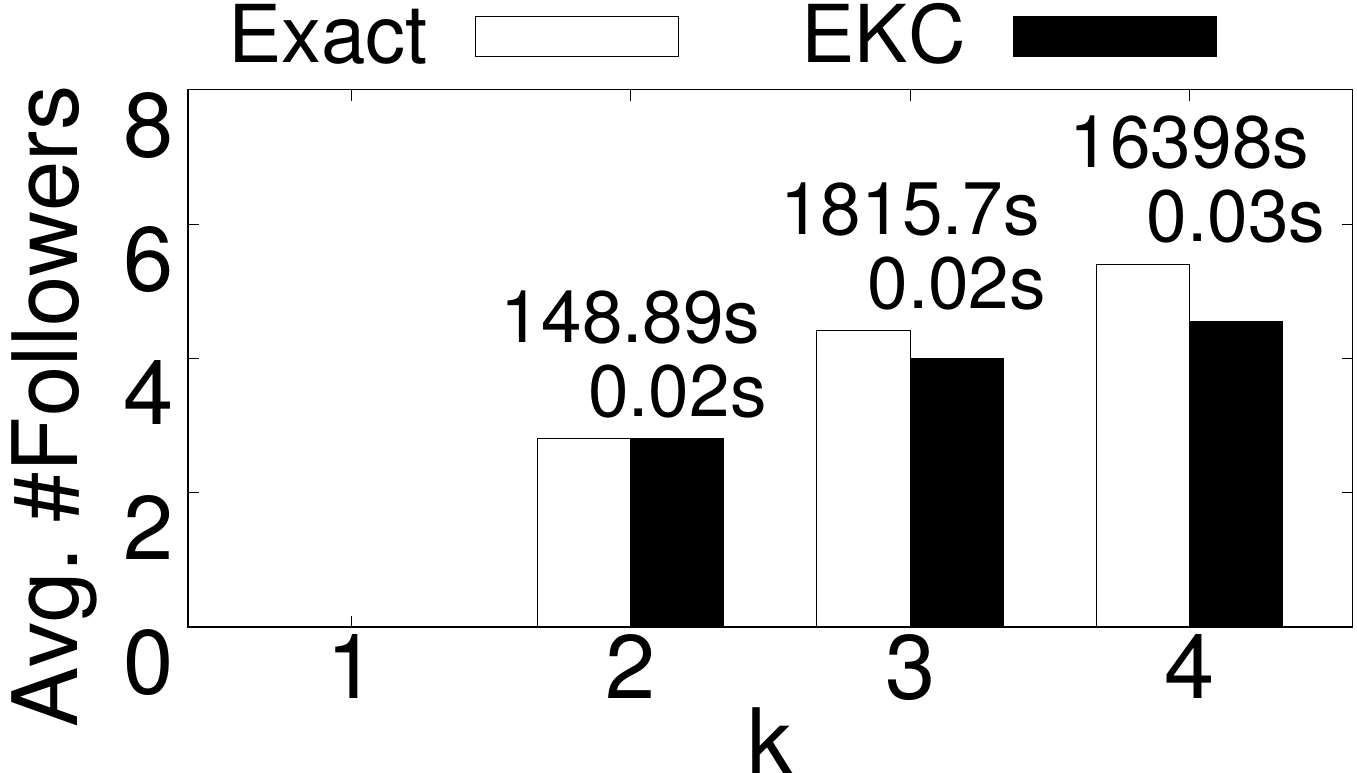}
%		\label{fig:exactvsekc_1}
%	}
%	\subfigure[Brightkite, k=3] {	\includegraphics[width=0.45\columnwidth,height=20mm]{exp_figs//effectiveness//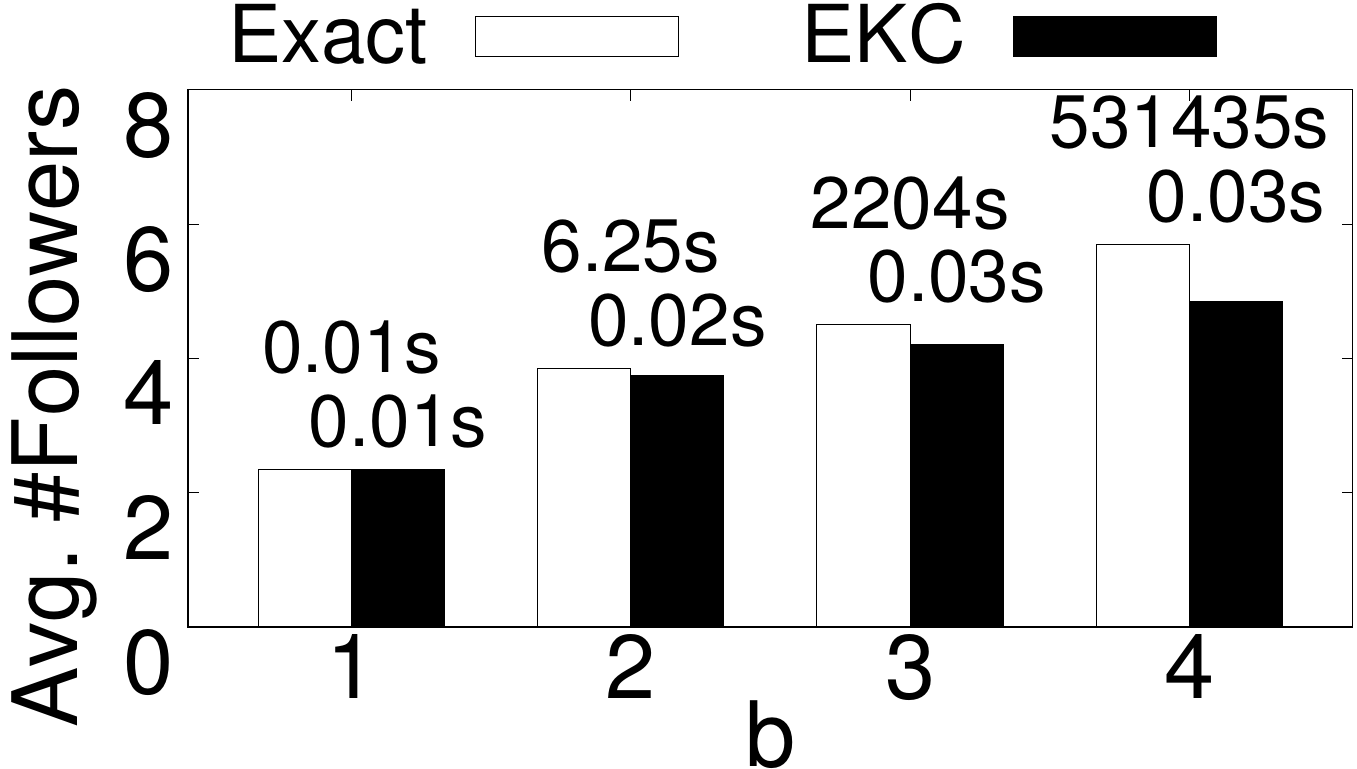}
%		\label{fig:exactvsekc_2}
%	}
%\vspace{-4mm}
%	\caption{\texttt{Exact} vs \texttt{EKC}}
%	\label{fig:exactvsekc}
%    \end{minipage}
%	\centering
%	\subfigure[Gowalla, k=20] { \includegraphics[width=0.45\columnwidth,height=20mm]{exp_figs//effectiveness//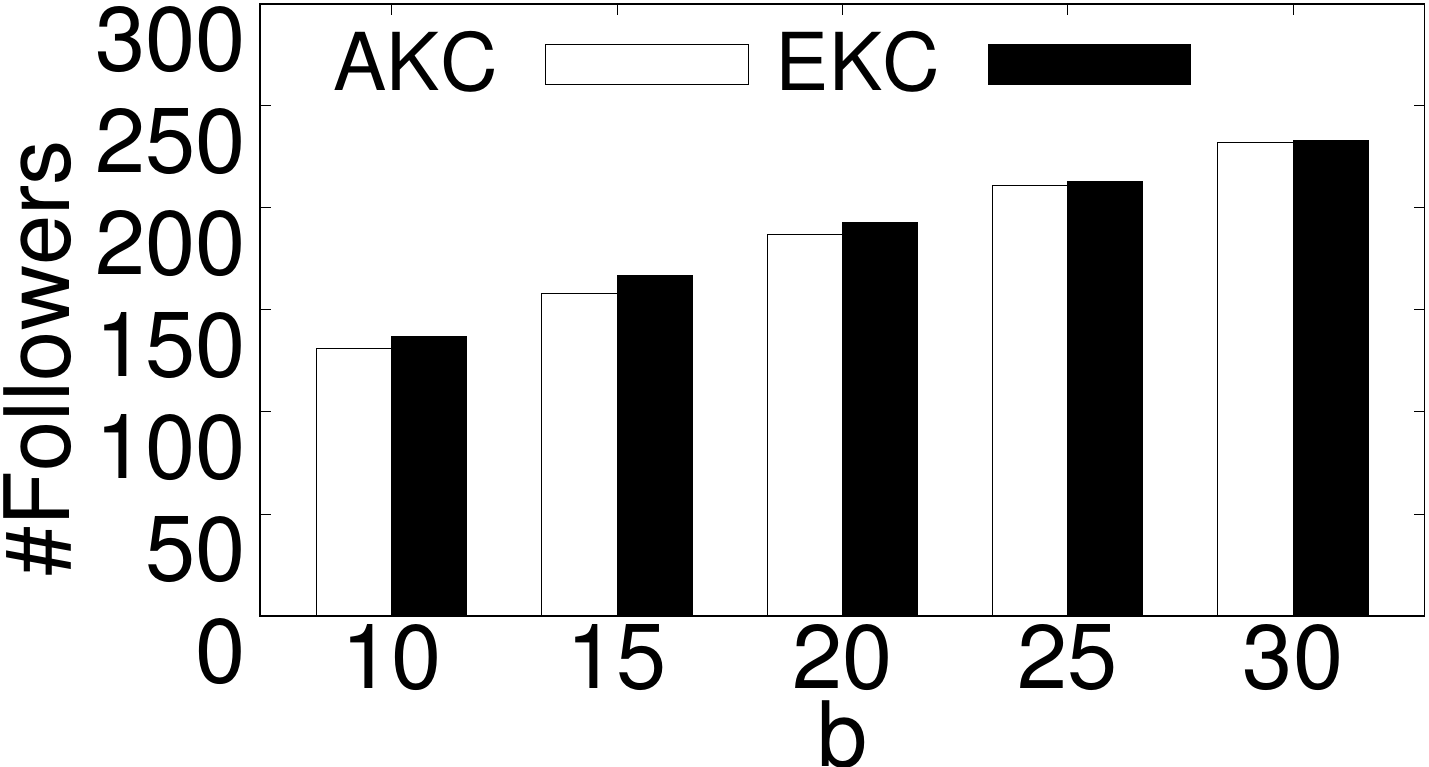}
%	}
%	\subfigure[Brightkite, b=5] { \includegraphics[width=0.45\columnwidth,height=20mm]{exp_figs//effectiveness//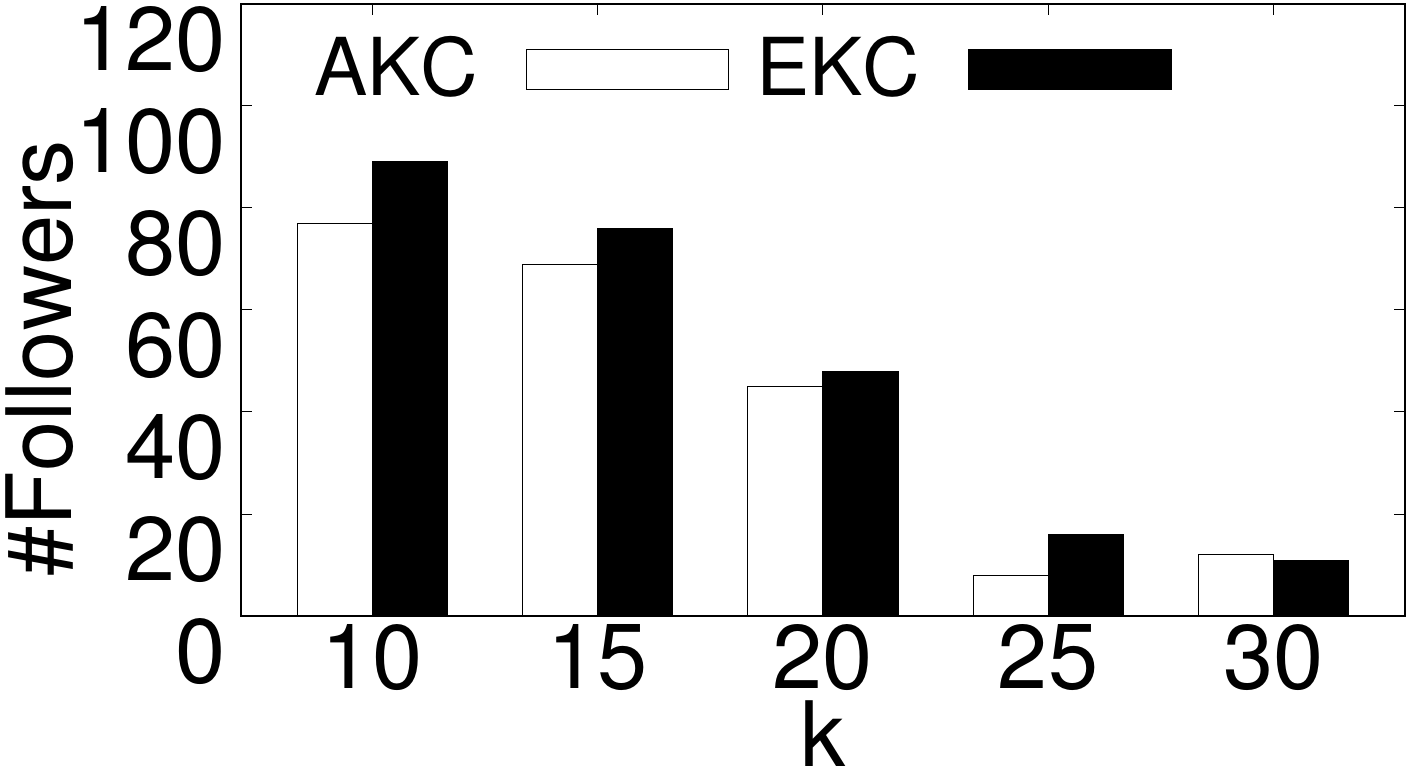}
%	}
%\vspace{-4mm}
%	\caption{Anchored \kc vs Edge \kc}
%	\label{fig:ekcvsanchor}
%    \end{minipage}%
%\end{figure*}

\subsection{Effectiveness}
\label{sec:exp:effectiveness}

%In this section, the number of followers produced by \texttt{EKC} and other approaches are compared. %The numbers of \texttt{EKC} are same to the other greedy algorithms due to the same heuristic.

%we only use the result of \texttt{EKC} to evaluate the effectiveness for all greedy algorithms, since they follow the same heuristic rules and produce the same results. % We also conducted case studies to show real-world examples for the \ekc.

%\vspace{1mm}
%\subsubsection{Effectiveness of the Greedy Algorithm}

\begin{figure}[t]
%\vspace{-2mm}
	%\vspace{2mm}
	\centering \includegraphics[width=0.8\columnwidth,height=20mm]{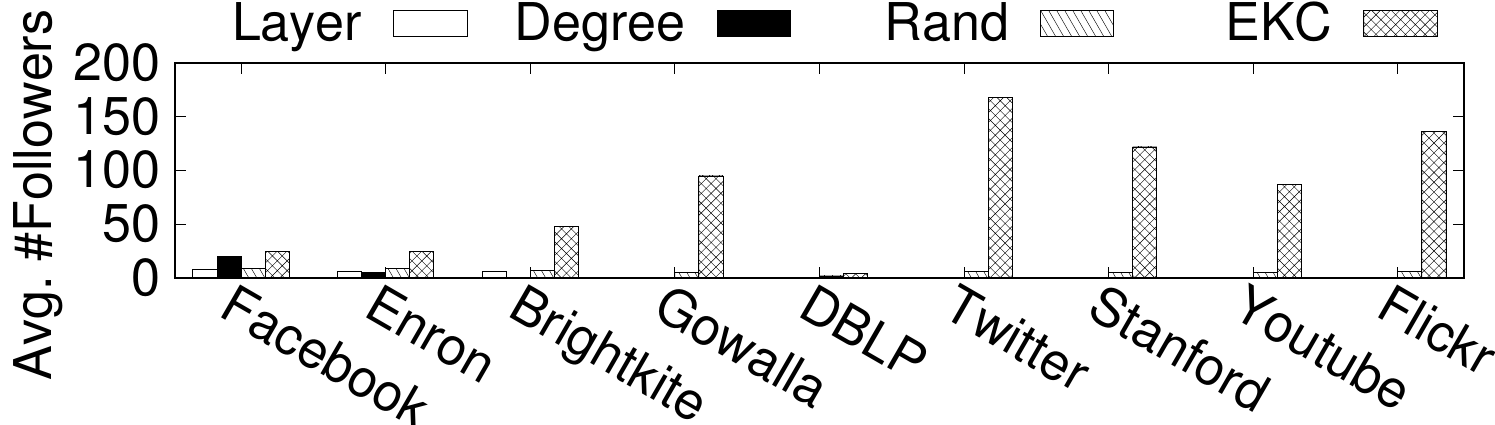}
	\vspace{-2mm}
	\caption{Effectiveness of the Greedy Heuristic, k=20, b=5}
	\label{fig:effect_rand}
	%\vspace{-4mm}
\end{figure}

Figure~\ref{fig:effect_rand} reports the number of \fols w.r.t. $b$ \ancs. % produced by all the algorithms.
%selected by \texttt{EKC} with that of a random approaches \texttt{Rand}, a degree based approach \texttt{Degree} and a layer based approach \texttt{Layer}.
For random approaches, we report the average number of \fols for 100 independent tests. %The resulting numbers of the other three approaches is always unique because they follow a specific rule.
In Figure~\ref{fig:effect_rand}, \texttt{Degree} and \texttt{Layer} failed to get any \fols in more than half of the datasets, because a large degree vertex or a vertex in higher \ol does not necessarily have a \fol.
A totally random algorithm cannot get any \fols in most settings.
%except in \texttt{Facebook}. It is because those \ces with \fols are not many, while the number of \ces is so large. %Only when the dataset gets smaller can random method find \fols.
%Even when we only consider those vertices in $C_{k-1}(G)$, it's still too hard for such a random algorithm to find those edges having \fols.
Thus, in \texttt{Rand}, we only choose the edges from the set where each edge has at least one \fol.
%When facing a randomly selected \ce, we add it to network and check whether it has \fols.
%In this way we ensure that all the \anc edges contribute to the \fols. Even so, \texttt{Rand} still can't get many \fols in all 7 real-life networks.
We observe that the majority of the \ces does not have any \fols, while our greedy heuristic always finds effective \ancs in the experiments. %In other word, a small number of \ces have most of the \fols.
%For \texttt{EKC}, we aimed to find the \anc edge with most \fols in each round, so we can easily identify such a small part of \ces with lots of \fols.
In Figure~\ref{fig:effect_rand}, we notice that \texttt{EKC} preserves more than 100 \fols in \texttt{Flickr} with only $5$ \ancs.

%\begin{figure}[h]
%	%\vspace{2mm}
%	\centering
%	\includegraphics[width=0.6\columnwidth]{exp_figs//effectiveness//exactvsekc.pdf}
%	\vspace{-2mm}
%	\caption{\texttt{Exact} vs \texttt{EKC}}
%	\label{fig:exactvsekc}
%	%\vspace{-4mm}
%\end{figure}

\begin{figure}[t]
%\vspace{-2mm}
	\centering
	\subfigure[Gowalla, b=3] { \includegraphics[width=0.47\columnwidth,height=22mm]{ekc_exact_gowalla_b=3.pdf}
		\label{fig:exactvsekc_1}
	}
	\subfigure[Brightkite, k=3] {
		\includegraphics[width=0.47\columnwidth,height=22mm]{ekc_exact_brightkite_k=3.pdf}
		\label{fig:exactvsekc_2}
	}
\vspace{-3mm}
	\caption{Exact vs EKC}
	\label{fig:exactvsekc}
\end{figure}

We also compare the performance of \texttt{EKC} with the optimal solution from \texttt{Exact}. %, which exhaustively computes the \fols for every combination of $b$ \ces.
%Due to the skyrocketing of \ce number ($O({n^2\choose b})$),
Due to extremely high cost, we run \texttt{Exact} on an induced subgraph by 50 random vertices.
The results are from 100 independent settings.
%Without loss of generality, we create 100 such induced subgraph.
The values of $k$ and $b$ are small due to the small-scale of data.
%In such cases, \texttt{Exact} is computed with small $k$ and $b$ values, due to the small-scale of data.
%because we still cannot afford its running time for other settings.
%But the number of possible combination of those $b$ \ces is $O(\tbinom{m}{b})$, which is even unscalable on small-scale networks. In fact, even when the budget is 2, the number of \ces can easily exceed ten million for a middle size network, which is obviously cost-prohibitive. So we turn to a kind of 50-vertices subgraph test.
%At first, we randomly select a vertex from the origin graph $G$ as the initial point, then we explore its neighbors, neighbors' neighbor, and so on. This expand process terminates when we explore 50 vertices. And these vertices constitute our new 50-vertices subgraph. Without loss of generality, we repeat this until we get fifty such subgraph. In this way, we approximately divide the origin graph $G$ into serveral some datasets, so that we can start to test \texttt{Exact}.
Figure~\ref{fig:exactvsekc} shows that the margins between \texttt{EKC} and \texttt{Exact} are not unacceptable, considering the non-submodular property and the significantly better efficiency of \texttt{EKC}.
%But the significantly improved efficiency of \texttt{EKC} worths a little bit loss in effectiveness.
%\texttt{EKC} can find nearly the same \fols as the optimal. That means greedy algorithm achieves comparable results with that of the optimal solution.

%\begin{figure}[h]
%	%\vspace{2mm}
%	\centering
%	\includegraphics[width=0.6\columnwidth]{exp_figs//effectiveness//ekcvsanchor.pdf}
%	\vspace{-2mm}
%	\caption{\texttt{Anchored $k$-core} vs \texttt{EKC}}
%	\label{fig:ekcvsanchor}
%	%\vspace{-4mm}
%\end{figure}

%We further compare the performance of \texttt{EKC} with the \akc algorithm from~\cite{DBLP:journals/pvldb/ZhangZZQL17}.

\begin{figure}[b]
%\vspace{-2mm}
	\centering
	\subfigure[Gowalla, k=20] {
		\includegraphics[width=0.47\columnwidth,height=22mm]{ekc_anchor_gowalla_k=20.pdf}
	}
	\subfigure[Brightkite, b=5] {
		\includegraphics[width=0.47\columnwidth,height=22mm]{ekc_anchor_brightkite_b=5.pdf}
	}
\vspace{-3mm}
	\caption{Anchor \kc vs Edge \kc}
	\label{fig:ekcvsanchor}
\end{figure}

Figure~\ref{fig:ekcvsanchor} shows that \texttt{EKC} produces similar and almost larger numbers of \fols than \texttt{AKC} where the meaning of $b$ is different: the former for edges and the latter for vertices.
Although the anchoring of one vertex means the preserving of many edges, \texttt{EKC} can produce similar \fols with one anchor edge.
It shows that the \ekc can enlarge the \kc with less graph manipulations. %, compared with \akc.
%to network in order to produce as much \fols as possible, while \akc problem anchor $b$ nodes. When a node is anchored, there're many edges to be anchored, which means a mass of modifications on the graph.
%So we conclude that \texttt{EKC} can produce the same number of \fols with less modifications to the network even there's some loss in effectiveness due to the non-submodular property.

%\subsubsection{Case Studies}

\subsection{Efficiency}
\label{sec:exp:efficiency}
%In this section, we investigate the efficiency of individual techniques and our \texttt{EKC} algorithm.

%all the techniques we proposed for \ce pruning, then compare our \texttt{EKC} algorithm with \texttt{Baseline}, \texttt{BL+O2} and \texttt{BL+OF} on different datasets. Here we use the number of \ces and the total time cost as our evaluation criteria.
\begin{figure}[t]
%\vspace{-2mm}
	%\vspace{2mm}
	\centering \includegraphics[width=0.8\columnwidth,height=20mm]{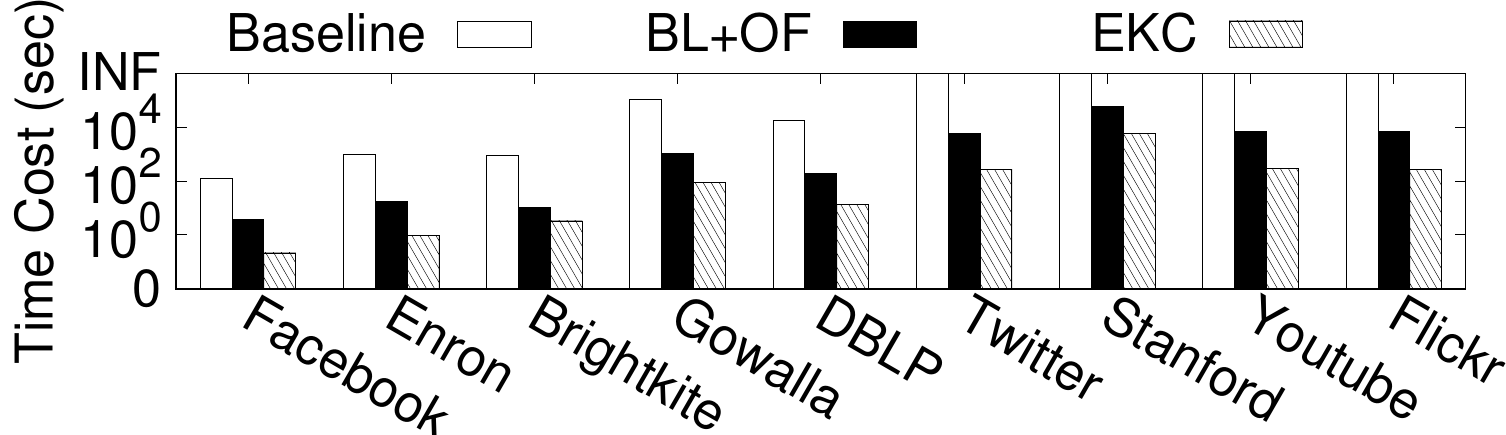}
	\vspace{-2mm}
	\caption{Running time on Different Datasets}
	\label{fig:multi-dataset}
	%\vspace{-4mm}
\end{figure}
%\vspace{1mm}
%\noindent \textbf{Effect of Different Datasets.}
Figure~\ref{fig:multi-dataset} reports the performance of three algorithms on all the datasets with $k = 20$ and $b = 5$.
The datasets are ordered by the number of edges.
We find the runtime among different datasets is largely influenced by the characteristics of datasets and the onion layer structures in \kc computation. %{\color{red} Or, more specifically, the scale of (k-1)-shell(roughly indicating the size of candidate set).}
%number of edges, the number of vertices and $k_{max}$.
\texttt{Baseline} cannot finish computation on 4 datasets within one week.
\texttt{BL+OF} significantly improve the performance by applying a series of techniques. \texttt{EKC} performs even better given all the optimizations.

%\subsubsection{Evaluation of Individual Techniques}
%All of our proposed techniques to accelerate \ekc computation can be seperate into two types. They either reduce the number of \ces or reduce the search space for \fols computation.
%Below, we evaluate the proposed techniques against these two criteria.

\begin{figure}[t]
%\vspace{-3mm}
	\centering
	\includegraphics[width=0.9\columnwidth]{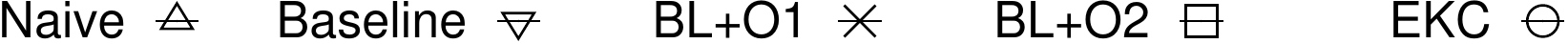}\\%\
	\begin{minipage}{0.45\columnwidth}
		\subfigure[Gowalla, b=20] {%
			\includegraphics[width=\columnwidth,height=20mm]{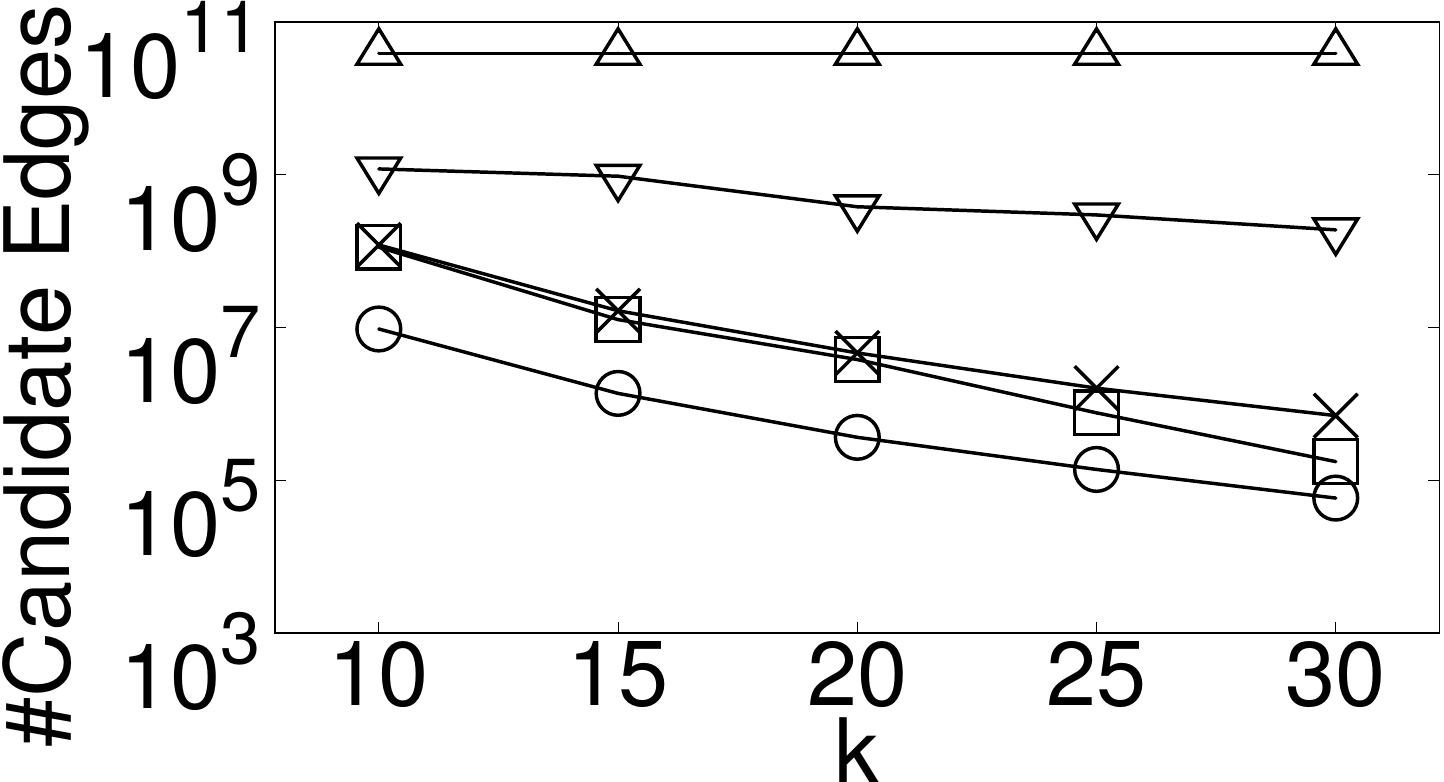}
		}
	\end{minipage}
	\begin{minipage}{0.45\columnwidth}
		\subfigure[Brightkite, b=5] {
			\includegraphics[width=\columnwidth,height=20mm]{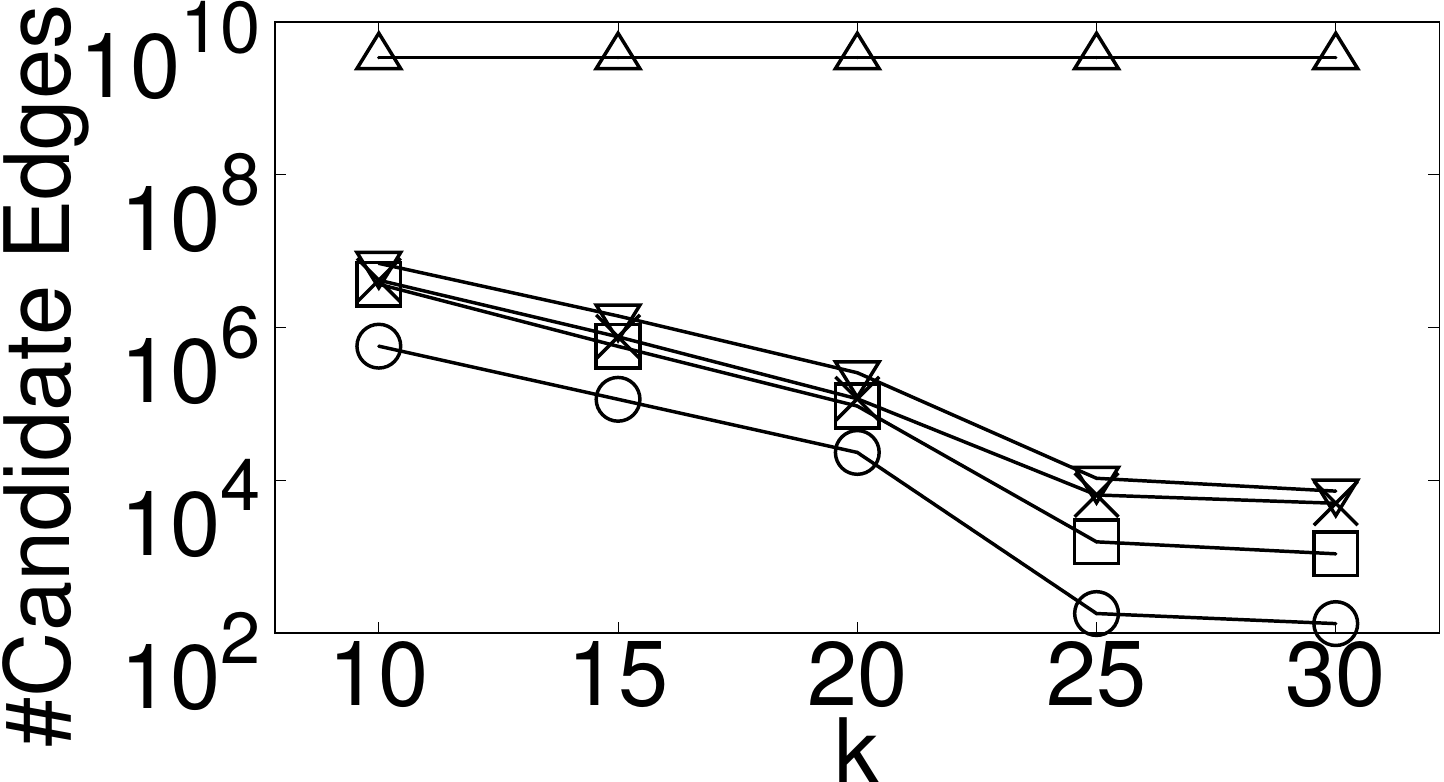}
		}
	\end{minipage}
\vspace{-3mm}
	\caption{Candidate Pruning}
	\label{fig:reducecandidate}
\end{figure}

%\vspace{1mm}
%\noindent \textbf{Evaluating the Candidate Pruning Techniques.}

In Figure~\ref{fig:reducecandidate}, we report the number of \ces of different algorithms by incrementally adding techniques, where the details of each algorithms is given in Table~\ref{tb:datasets}.
We can see the improvement brought by each proposed technique.

\begin{figure}[b]
	\centering
	\includegraphics[width=0.75\columnwidth]{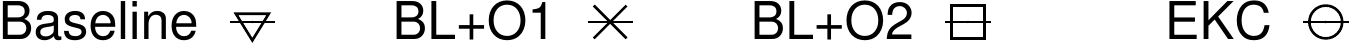}\\%\
	\begin{minipage}{0.47\columnwidth}
		\subfigure[Brightkite, b=5] {
			\includegraphics[width=\columnwidth,height=20mm]{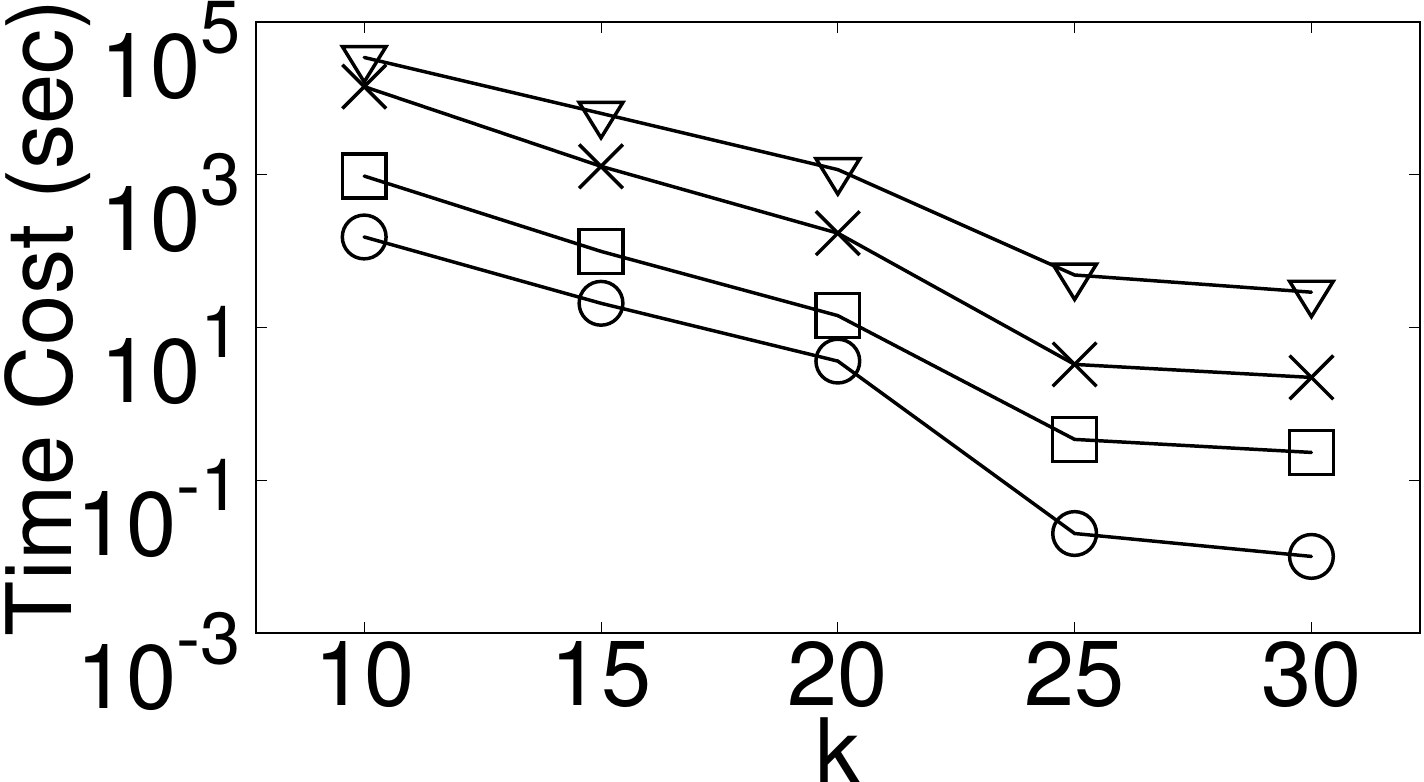}
		}
	\end{minipage}
	\begin{minipage}{0.47\columnwidth}
		\subfigure[Brightkite, k=20] {
			\includegraphics[width=\columnwidth,height=20mm]{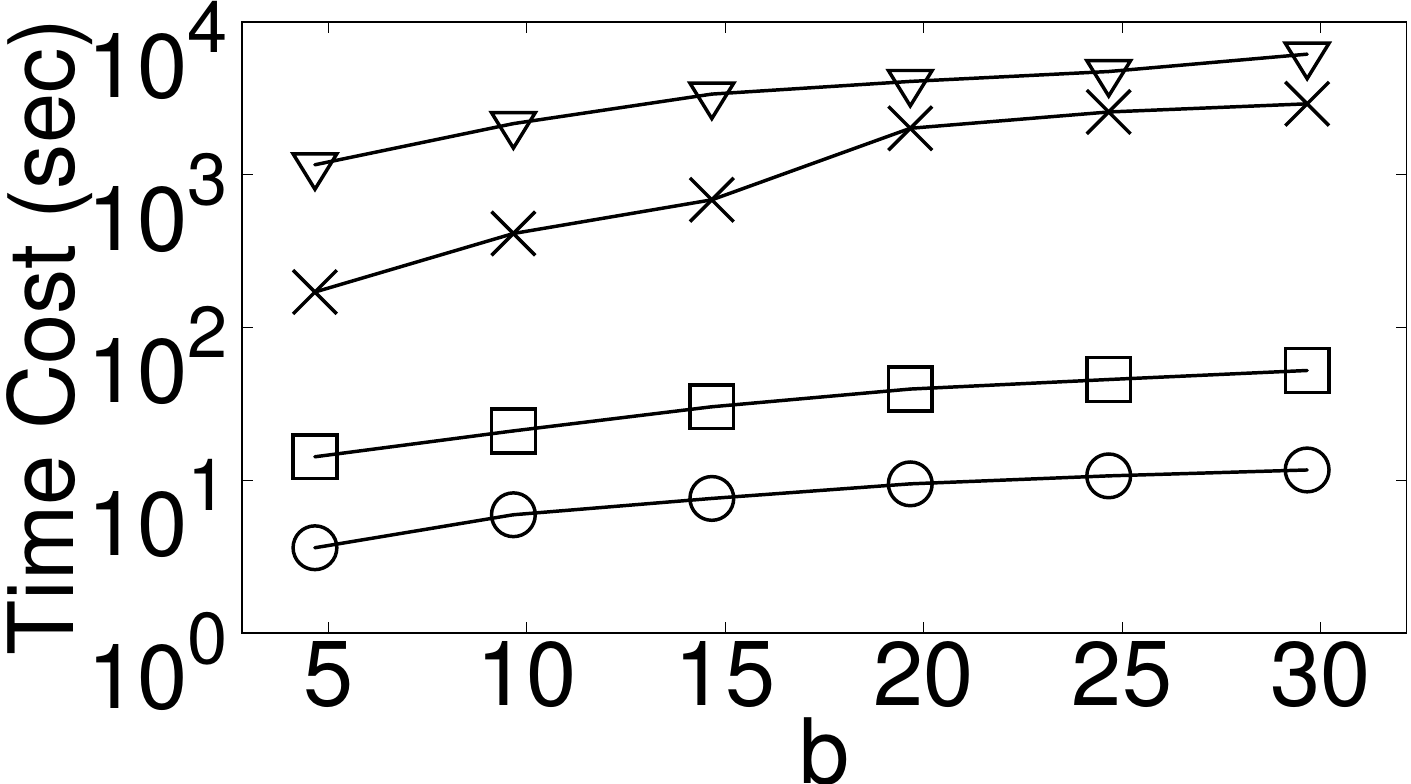}
		}
	\end{minipage}
\vspace{-3mm}
	\caption{Runtime with Different $k$ and $b$}
	\label{fig:k&r}
\end{figure}
%
%\vspace{1mm}
%\noindent \textbf{Effect of $k$ or $r$.}
Figure~\ref{fig:k&r} studies the impact of $k$ and $r$.
Figure~\ref{fig:k&r}(a) shows that the runtime decreases with a larger input of $k$. This is because the number of candidate edges become fewer with a larger $k$.
Figure~\ref{fig:k&r}(b) reports the runtime with an increasing $b$, which is proportional to the value of $b$.
The margin between \texttt{BL+OF} and \texttt{EKC} becomes smaller with a larger $k$, because the number of connected components becomes less when $k$ increases.
%The margin between \texttt{BL+O} and \texttt{Baseline} becomes smaller with a larger $b$. It is because the number of connected components becomes less when $k$ increases.
It is reported that \texttt{EKC} largely outperforms the other algorithms under all the settings.

\section{Conclusion}
\label{sec:conc}

In this paper, we investigate the problem of \ekc, which aims to add a set $b$ of edges in a network such that the size of the resulting \kc is maximized.
% collapse the network such that the size of resulting \ckc is minimized.
%Solving this problem helps us to identify the individuals whose participation is most critical to the overall engagement of the community.
We prove the problem is NP-hard and APX-hard.
An efficient algorithm, named \texttt{EKC}, is proposed with novel optimizations. 
% candidate pruning rules, effective follower computation and result reuse strategy.
%Due to the hardness of the problem, we first propose a baseline greedy heuristic approach. To further accelerate the method, a series of novel candidate pruning rules are proposed, which significantly reduces the number of candidate vertices. We also adopt a useful \ol structure to speed up computation. In addition, we introduce the \texttt{EKC} algorithm which integrates all the optimization techniques. Lastly,
Extensive experiments on 9 real-life datasets are conducted to demonstrate our model is effective and our algorithm is efficient.

\section*{Acknowledgments}
Xuemin Lin is supported by 2019DH0ZX01, 2018YFB10035\\04, NSFC61232006, ARC DP180103096 and DP170101628.
Wenjie Zhang is supported by ARC DP180103096.
\bibliographystyle{named}
\bibliography{ref}

\end{document}